\documentclass[ijoo,sglanonrev]{informs4}

\usepackage{eqndefns-left} 
\RequirePackage{tgtermes}
\RequirePackage{newtxtext}
\RequirePackage{newtxmath}
\RequirePackage{bm}
\RequirePackage{endnotes}

\OneAndAHalfSpacedXII 

\usepackage{algorithm}
\usepackage{algpseudocode}
\usepackage{tikz}
\usepackage{bbm}
\usepackage{placeins}

\usepackage{natbib}
 \bibpunct[, ]{(}{)}{,}{a}{}{,}%
 \def\bibfont{\small}%
\EquationsNumberedThrough    

\TheoremsNumberedThrough     
\ECRepeatTheorems  %

\MANUSCRIPTNO{IJOO-0001-2026.00}

\begin{document}



\RUNAUTHOR{Mantegna et al.}

\RUNTITLE{Uncertainty-Aware Grid Planning in the Real World}

\TITLE{Uncertainty-Aware Grid Planning in the Real World: A Method Enabling Large-Scale, Two-Stage Adaptive Robust Optimization for Capacity Expansion Planning}


\ARTICLEAUTHORS{%
\AUTHOR{Gabriel Mantegna}
\AFF{Department of Mechanical and Aerospace Engineering and the Andlinger Center for Energy and the Environment,
Princeton University, \EMAIL{gabe.mantegna@princeton.edu}}

\AUTHOR{Emil Dimanchev}
\AFF{Andlinger Center for Energy and the Environment,
Princeton University}

\AUTHOR{Filippo Pecci}
\AFF{RFF-CMCC European Institute on Economics and the Environment}

\AUTHOR{Neha Patankar}
\AFF{School of Systems Science and Industrial Engineering, Binghamton University}

\AUTHOR{Jesse Jenkins}
\AFF{Department of Mechanical and Aerospace Engineering and the Andlinger Center for Energy and the Environment,
Princeton University}

} 

\ABSTRACT{%
Capacity expansion models are frequently used to inform multi-billion dollar grid infrastructure decisions, a context in which there is significant uncertainty surrounding the future need for and performance of such infrastructure. However, despite much academic literature on the topic, virtually no grid planning processes use capacity expansion models that endogenously consider uncertainty, an oversight which frequently leads to short-sighted infrastructure decisions. This is partially due to a technology transfer gap, but it is also due to a lack of methods that work at large scale. In this paper we introduce a method for endogenizing uncertainty into capacity expansion models, a variant of adaptive robust optimization, that addresses this gap. We apply the method to a real-world capacity expansion planning problem, that of the State of California, and compare its performance to that of traditional adaptive robust optimization. We find that both the traditional method and our method identify increased transmission investment as a key lever for increasing robustness and adaptability, while helping to avoid downside risks that current deterministic planning processes may be exposing ratepayers to. Our method performs similarly to the traditional method in terms of outcomes, while significantly reducing computational complexity, making it scalable to real-world planning problems.
}%

\FUNDING{This research was supported by Sonoma Clean Power, Peninsula Clean Energy, and the Maeder Graduate Fellowship in Energy and the Environment at Princeton University.}



\KEYWORDS{Power systems, capacity expansion, robust optimization, adaptive robust optimization, planning under uncertainty} 

\maketitle


\section{Introduction}\label{sec:Intro}

\subsection{Motivation}\label{subsec:Motivation}

Capacity expansion models are frequently used for planning multi-billion dollar electric grid infrastructure investments around the world, including both power generation and transmission assets. These models are used in decision environments where there is abundant uncertainty about how the future might play out, including future changes in demand, resource availability and costs, policy, and generator retirements. The growth of data centers, electrification of transportation, heat and other end-uses, and the commercialization of nascent technologies such as enhanced geothermal, advanced nuclear, and long-duration storage add to the uncertainty facing planners today.

Despite the abundance of uncertainty that is certain to impact the future performance of these investments, electricity system capacity expansion models that endogenously consider uncertainty are rarely used in practice. This is in spite of an extensive academic literature that presents methods to accomplish this goal (surveyed below) and evidence that endogenously considering uncertainty is crucial to developing investment plans that are flexible and adaptable \citep{munoz_engineering-economic_2014,van_der_weijde_economics_2012}. Models that endogenously consider uncertainty (e.g., via stochastic or robust optimization)  vary in their outcomes relative to deterministic models that assume the future is static. Sometimes, uncertainty-aware models recommend a ``wait-and-see'' approach when the future is highly uncertain and decisions are best made once key uncertainties are resolved, and other times, these models recommend an ``insurance premium'' approach in which near-term investments are made in flexible infrastructure that is likely to be needed in many possible realizations of the future.

Unfortunately, neither strategy will be identified by deterministic modeling with scenarios, a common approach to address uncertainty employed by electricity infrastructure planners today. In this approach, deterministic models are run with multiple scenarios to represent future uncertainty. Such scenario-assisted processes can help provide decision-makers with information about how near-term decisions would change under different scenarios for how the future could play out, but crucially, they are of limited usefulness in helping to identify investments that do well across a range of possible futures. There are some simple process improvements available that are sometimes used: \textit{stress-testing} investment decisions across multiple possible futures can help to inform decision-makers about the performance of investments across multiple scenarios, and \textit{heuristics} that combine information from scenarios can help to identify ``least-regrets'' investment decisions that are likely to be needed no matter how the future pans out \citep{mantegna_integrated_2026}. These methods are used by some transmission planning entities such as MISO \citep{midcontinent_independent_system_operator_miso_2024} and ENTSO-E \citep{sanchis_europes_2015}. However, the fundamental limitation of scenario-based modeling is that each model run assumes that decisions are made with certainty as to how the specific scenario will play out, an assumption that is clearly false. Deterministic scenario runs are thus likely to miss investment options that are not optimal in any one scenario, but do well across many scenarios \citep{munoz_engineering-economic_2014}. Given this limitation, it is likely that many current electricity generation and transmission planning processes are missing out on significant welfare-increasing investment options that have the potential of lowering risks for electricity consumers (ratepayers) in the long run.

Part of the reason for this gap between industry practice and academic methods is a technology transfer lag. Most commercial software tools used for capacity expansion planning do not have uncertainty-aware algorithms built in, except for with regards to weather uncertainty, which is commonly endogenized due to the availability of empirical distributions. In hydropower-dominant systems, planners frequently employ uncertainty-aware algorithms such as Stochastic Dual Dynamic Programming to help with hydro scheduling \citep{maceiral_twenty_2018}, but these methods are similarly focused on weather-related uncertainty. However, there is also a practical computational challenge: most real-world capacity expansion problems are so-called ``massive'' LPs that take hours if not days to run with state-of-the-art commercial solvers, meaning there is limited appetite for adopting uncertainty-aware methods that further expand model complexity and could be orders of magnitude slower to solve. This challenge is exacerbated by the fact that some uncertainty-aware methods, such as stochastic optimization, require the assignment of probabilities to scenarios in a context where defensible probabilities are difficult to determine, which could lead planning processes to become even more contested. To our knowledge, only a handful of real-world planning processes (including Portland General Electric's Integrated Resource Planning process), have used a capacity expansion model that endogenizes non-weather uncertainty sources \citep{portland_general_electric_integrated_2019}. (The PGE IRP process is also a good example of how stochastic methods can be used alongside deterministic methods within a strong scenario analysis framework, as discussed in \cite{mantegna_integrated_2026}.) 

In this paper, we present a novel algorithm for uncertainty-aware capacity expansion planning that is explicitly designed to be computationally tractable  enough to use for real-world, large-scale problems. Our method is an adaptation of two-stage adaptive robust optimization that leverages problem structure to enable a much more scalable formulation than traditional methods. It also has the advantage (like any robust optimization-based approach) of not requiring assignment of probabilities in contexts where defensible probabilities are difficult to determine; rather, robust optimization relies on plausible ranges for uncertain parameters, which may be easier to determine in real-world planning contexts. After a brief literature review, we present our algorithm, as well as a case study where our method is applied to a full-scale capacity expansion problem that replicates the real-world capacity expansion planning model currently used by the state of California \citep{california_public_utilities_commission_inputs_2023}. California is chosen for the case study given the state's current planning context, in which a deterministic modeling framework is used to make generation and transmission planning decisions in an environment of significant uncertainty.

\subsection{Literature Review}\label{subsec:LitReview}

The literature on uncertainty-aware capacity expansion planning can in general be segregated into two buckets: stochastic methods, which require assigning probabilities to scenarios and seek to minimize expected costs, and robust methods, which represent uncertainty via plausible parameter ranges rather than probabilities and seek to minimize costs under worst-case outcomes.

\subsubsection{Stochastic methods}

Several foundational papers published over a decade ago were some of the first to apply stochastic programming -- long used in other domains such as finance -- to electricity system capacity expansion planning. \cite{munoz_engineering-economic_2014} and \cite{van_der_weijde_economics_2012} showed that stochastic capacity expansion modeling, specifically in the context of transmission expansion decisions, can result in lower-cost portfolios in the long term, with cost savings on the same order of magnitude as the near-term transmission investments themselves. Similarly, \cite{jin_modeling_2011} was one of the first to apply stochastic methods to capacity expansion planning, although it was focused on demand and fuel price uncertainty, as opposed to the prior two papers which also incorporated other macro-scale uncertainty sources such as future policy changes. In recent years, these methods have been improved on and extended through the development of methods for solving large-scale stochastic problem instances, such as Benders decomposition \citep{zhang_integrated_2025, pecci_regularized_2025}, progressive hedging \citep{watson_progressive_2011}, and projective hedging \citep{eckstein_projective_2025}. These algorithms allow for the solution of somewhat large-scale stochastic problems, though they are still significantly slower than their deterministic counterparts.

An important note on stochastic methods applied to capacity expansion planning is that often, only weather-driven sources of uncertainty are included (resource availability, and weather-driven load variations). This is the case with some high-profile stochastic models used in practice as described in \cite{zampara_capacity_2025}. These sources of uncertainty are important to include, since they impact the performance of investment decisions. However, a commonly overlooked fact is that these models are not fundamentally different from capacity expansion models that endogenously consider a wide range of operating conditions using representative periods \citep{merrick_representation_2016, poncelet_selecting_2017} and that parametrize probabilistic reliability constraints through reserve margins and capacity accreditation \citep{mantegna_electric_2025}. Thus, in practice it is common for practitioners to use ``stochastic'' models that do not represent the most important forms of long-term uncertainty that could affect the performance of investment decisions.

\subsubsection{Robust methods}

Robust optimization is another class of methods that has been applied to capacity expansion problems in the academic literature. Robust optimization was first developed in the late 90s and early 00s \citep{ben-tal_robust_2000, bertsimas_price_2004} and differs from stochastic optimization in that it uses plausible ranges that uncertain parameters can be expected to take, rather than probabilities, to represent uncertainty. Optimal solutions are then found based on ``worst cases'' within a desired uncertainty set. This method has been applied in the energy systems planning context by papers such as \cite{patankar_using_2022}, although in this case only uncertainties in resource and fuel costs were included. Two-stage adaptive robust optimization, an important extension which allows for the adaptation of later decisions in response to realizations of uncertainty, was first developed and applied by \cite{zeng_solving_2013} and applied to power sector problems by \cite{bertsimas_adaptive_2013} in the context of the unit commitment and economic dispatch problem, which is similar in structure to the capacity expansion planning problem but concerns day-ahead scheduling of thermal power plants in the face of forecast error and generator outages. This method has been applied to the capacity expansion planning problem by papers such as \cite{moreira_reliable_2017}, \cite{moreira_climateaware_2021}, and \cite{bernecker_adaptive_2025}. Crucially, these latter papers only model investment decisions in the first stage, and operational decisions in the second stage, an important simplification (likely made for tractability) that means these algorithms are of limited usefulness for real-world planning problems, where multiple investment stages are nearly always included. This is not just an oversight, but rather a result of the inherent complexity of two-stage adaptive robust optimization: the method involves maximization of a convex function, making the optimization problem NP-hard (in layman's terms, this means that the formulation does not work well for large-scale problems). Given real-world deterministic planning problems with multiple investment stages can already take hours if not days to solve, it is highly unlikely that traditional two-stage adaptive robust optimization could be used to solve problems of this scale.

\subsection{Purpose and Contribution}\label{subsec:Purpose}

The purpose of this paper is to present an optimization formulation for capacity expansion planning that endogenizes important sources of uncertainty \textit{and} is tractable enough that it can be used for real-world, large-scale planning problems. We use a reformulation of two-stage adaptive robust optimization that takes advantage of partial convexities in the planning problem and can thus be solved as a linear program (LP). Following the introduction of our method, we demonstrate the performance of the method in a case study in which we apply the method to a real-world, full-scale planning problem that replicates the capacity expansion planning model used by the state of California known as RESOLVE. We first develop an exact representation of the RESOLVE model using the open-source GenX model \citep{jenkins_enhanced_2017, bonaldo_genx_2024} and demonstrate the benchmarking of our instance, then apply the robust planning method and discuss its performance. The remainder of the paper is structured as follows:

\begin{itemize}
    \item Section \ref{sec:StdFormulation} introduces the existing deterministic and two-stage robust formulations that our formulation is based on.
    \item Section \ref{sec:SplitBudgetFormulation} introduces our enhanced formulation.
    \item Section \ref{sec:CaseStudy} presents the real-world case study and introduces the experimental design (\ref{subsec:ExpDesign}), and then discusses the case study results, split into a presentation of the impacts of endogenizing uncertainty using standard two-stage ARO without our method (\ref{subsec:MainResults}) and a comparison of the standard formulation with our method (\ref{subsec:ComparisonResults}).
    \item Section \ref{sec:Conclusion} provides concluding remarks.
\end{itemize}

\section{Deterministic and two-stage robust capacity expansion}\label{sec:StdFormulation}

\subsection{Deterministic capacity expansion}\label{subsec:Deterministic}

The standard deterministic capacity expansion problem can be written in general form as follows, with $x_1$ and $y_1$ representing first-stage investment and operational decisions respectively, and $x_2$ and $y_2$ representing second-stage (also known as recourse) investment and operational decisions:

\begin{align} \label{cem_det}
    \min_{x_1,y_1,x_2,y_2} & c_1^Tx_1 + b_1^Ty_1 + c_2^Tx_2 + b_2^Ty_2 \\
    \text{subject to:} & \notag \\
    & A_1x_1 + B_1y_1 \leq g_1 \notag \\
    & A_2x_2 + B_2y_2 \leq g_2 + Cx_1 \notag
\end{align}

Investment decisions include generation and transmission investments, and operational decisions represent generator dispatch and transmission flows to meet load balancing and other balancing constraints. This two stage form is used to highlight that some subset of capacity expansion decisions can be considered ``here-and-now'' decisions, which go in the first stage, while the rest of the decisions can be considered ``wait-and-see'' decisions, which go in the second stage. The feasible space of second stage decisions is impacted by the first stage investment decisions as seen by the right-hand-side of the second constraint. In the case study of this paper, the two stages each consist of one planning period, meaning the investment and operational decisions for one target future year, but they could also consist of multiple planning periods.

\subsection{Two-stage adaptive robust capacity expansion}\label{subsec:ARO}

Using this same form, we can write the two-stage adaptive robust version of this problem as:

\begin{align} \label{aro}
    \min_{(x_1,y_1) \in \Psi} \ \big( & c_1^T x_1 + b_1^Ty_1 + \max_{u \in U} \min_{(x_2,y_2) \in \Omega(x_1,u)} c_2(u)^Tx_2 + b_2(u)^Ty_2 \big) \\
    \text{where:} & \notag \\
    \Psi &= \{ x_1 \in \mathbb{R}^m_+, y_1 \in \mathbb{R}^n : A_1x_1+B_1y_1 \leq g_1\} \notag \\
    U &= \{ u \in \mathbb{R}^p : \| u \|_0 \leq \Gamma, \| u \|_{\infty} \leq 1 \} \notag \\
    \Omega(x_1,u) &= \{ x_2 \in  \mathbb{R}^m_+, y_2 \in  \mathbb{R}^n : A_2x_2+B_2y_2 \leq g_2(u) + Cx_1 \} \notag
\end{align}

Where $m$ is the number of investment variables, $n$ the number of operational variables, and $p$ the number of uncertain parameters. In this formulation, the innermost minimization represents optimal recourse decisions under an uncertainty realization $u$, while the outer maximization identifies the worst-case realization $u$ for any set of first-stage decisions. This form utilizes a so-called ``budget'' uncertainty set, originally introduced by \cite{bertsimas_price_2004}, in which $\Gamma$ uncertain parameters are allowed to deviate from their nominal, or deterministically defined, values. Increasing $\Gamma$ has the effect of increasing the ``robustness'' of the solution to uncertainty in the input parameters  \citep{bertsimas_robust_2022}. In the current context of capacity expansion, higher $\Gamma$ corresponds to planning for more simultaneous adverse deviations from the nominal or expected conditions (e.g., higher load, higher costs, or lower resource availability). Note that $c_2(u)$ and the other functions of $u$ are linear mappings from values of u, the uncertainty set expressed in $[-1,1]^p$ space, to values of $c_2$ (and the other uncertain parameters respectively) that vary between the lower bounds $\overline{c}_2-\hat{c}_2$ and upper bounds $\overline{c}_2+\hat{c}_2$ depending on the respective value of $u$. These are linear functions but are written in this form for simplicity, to avoid definition of auxiliary matrices or breaking up $u$ into multiple variables.

Problem (\ref{aro}) is non-convex because of the cardinality constraint $\|u\|_0 \leq \Gamma$. Following \cite{bertsimas_price_2004}, this is often relaxed to an $l_1$-norm constraint to obtain a convex uncertainty set:

\begin{equation}
    U_{cvx}=\{u \in \mathbb{R}^p : \|u\|_1 \leq \Gamma, \|u\|_{\infty} \leq 1 \}
\end{equation}

However, this relaxation is only guaranteed to be tight in single-stage problems, and not in two-stage adaptive robust problems, a technicality that will become relevant later in this section when we explore logical extensions of our novel formulation.

Another special case of (\ref{aro}) that will be relevant later is the case of full recourse, in which the second stage problem is feasible for all first stage feasible points $x_1$ and $y_1$, and all uncertainty realizations $u$:

\begin{equation}
    \Omega(x_1,u) \neq \emptyset \  \forall \ (x_1,y_1) \in \Psi, u \in U
\end{equation}

In the capacity expansion context, this full recourse assumption is easily enforced through the introduction of slack variables in the balancing constraints.

The two-stage adaptive robust formulation has the advantage of ensuring that first stage decisions are ``robust'' to realizations of uncertainty within the uncertainty set, while allowing later decisions to be ``adapted'' after the uncertainty is realized. In this way, important sources of uncertainty can be taken into account when making near-term decisions, while accounting for the fact that corrective actions can be made later, thus preventing solutions from being overly conservative. As previously mentioned, this formulation also has the advantage of not requiring the definition of probabilities, which can be useful in applied decision-making contexts where assigning arbitrary probabilities can be contentious.

Because the uncertainty set is convex and the inner minimization yields a convex value function, the outer maximization becomes a non-convex optimization problem, making the overall problem NP-hard. It is usually approximately solved via a column-and-constraint generation algorithm, as in \cite{zeng_solving_2013}, \cite{bertsimas_adaptive_2013}, and \cite{moreira_reliable_2017}, which is a form of decomposition method in which ``worst case'' scenarios are iteratively added to the master problem. However, this algorithm does not scale to large-scale, real-world capacity expansion instances, given these real-world problems are already so-called ``massive'' LPs that take hours or days to solve. This point is further evidenced by the fact that existing studies applying adaptive robust optimization to capacity expansion planning have implemented the algorithm only on test systems (such as the IEEE 118-bus case as in \cite{moreira_reliable_2017}) rather than real-world, full-scale planning problems.

\section{Split uncertainty budget formulation}\label{sec:SplitBudgetFormulation}

To enable two-stage adaptive robust capacity expansion problems to be solved at large scale, we propose a formulation that uses a split uncertainty budget in order to leverage partial convexities, that can be solved as a linear program (LP). We begin by presenting our formulation, and then proceed to prove that it can be solved as an LP, with the help of two lemmas. Our formulation is as follows:

\begin{align} \label{aro_split}
    \min_{(x_1,y_1) \in \Psi} \ \big( & c_1^T x_1 + b_1^Ty_1 + \max_{u_{RHS} \in U_{RHS}} \max_{u_c \in U_c} \min_{(x_2,y_2) \in \tilde{\Omega}(x_1,u_{RHS})} c_2(u_c)^Tx_2 + b_2(u_c)^Ty_2 \big) \\
    \text{where:} & \notag \\
    \Psi &= \{ x_1 \in \mathbb{R}^m_+, y_1 \in \mathbb{R}^n : A_1x_1+B_1y_1 \leq g_1\} \notag \\
    U_{RHS} &= \{ u_{RHS} \in \mathbb{R}^q : \| u_{RHS} \|_0 \leq \Gamma_{RHS}, \| u_{RHS} \|_{\infty} \leq 1 \} \notag \\
    U_c &= \{ u_c \in \mathbb{R}^r : \| u_c \|_1 \leq \Gamma_c, \| u_c \|_{\infty} \leq 1 \} \notag \\
    \tilde{\Omega}(x_1,u_{RHS}) &= \{ x_2 \in  \mathbb{R}^m_+, y_2 \in  \mathbb{R}^n : A_2x_2+B_2y_2 \leq g_2(u_{RHS}) + Cx_1 \} \notag \\
    \tilde{\Omega}(x_1,u_{RHS}) &\neq \emptyset \  \forall \ (x_1,y_1) \in \Psi, u_{RHS} \in U_{RHS} \notag
\end{align}

In this formulation, the uncertainty set is split into two sets: one for cost coefficients in the objective function ($U_c$), and another for second-stage right-hand-side constraint coefficients, which are represented with scenarios ($U_{RHS}$). This has the impact of modeling a \textit{subset} of the full uncertainty set $U$, since the uncertainty realizations modeled in the split-budget formulation will always have some cost-based uncertainties deviating from their nominal values at optimality, and some right-hand-side uncertainties deviating from their nominal values, whereas the full uncertainty set $U$ may include realizations at optimality where only uncertainties in one or the other category deviate from their nominal values. This has the impact of making the solutions slightly less robust compared to the full uncertainty set formulation, although they are of course guaranteed to be more robust compared to deterministic formulations in which uncertainty is not endogenized. This split-budget formulation is not necessarily an inferior formulation, though, since it might be a reasonable assumption on the part of the modeler that both some cost and non-cost uncertain parameters might be expected to deviate from their nominal values.

Note that, in this formulation, $U_c$ uses the $l_1$-norm relaxation of the budget uncertainty set, which is exact for single-stage problems \citep{bertsimas_price_2004}, but constitutes a relaxation for the present problem as compared to a $l_0$-norm, cardinality-constrained uncertainty set. However, we note that the $l_1$-norm relaxation is frequently used in two-stage ARO for tractability, as for example in \cite{bertsimas_adaptive_2013}. We keep the $l_0$-norm uncertainty set for $U_{RHS}$ as this can be represented explicitly via scenarios.

This formulation (\ref{aro_split}) can be represented as an LP, and it scales well for cost-based uncertainties in particular, meaning it is useful for large-scale, real-world problems. We prove the claim that it can be represented as an LP with the help of two lemmas.

\begin{lemma}[LP reformulation of two-stage ARO for cost-based uncertainties]\label{thm:LPCost}
    If the only uncertain parameters in a linear two-stage adaptive robust optimization model with full recourse are objective function coefficients (costs), the model can be converted to a single-stage model and solved as an LP.
\end{lemma}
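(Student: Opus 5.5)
The plan is to collapse the max--min structure by dualizing the second-stage problem and then exchanging the order of the two maximizations, so that the uncertainty enters only through a linear robust constraint. Then standard LP duality for the $l_1$-budget set removes the inner maximization, as in \cite{bertsimas_price_2004}.

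Fix $(x_1,y_1)\in\Psi$. With only cost uncertainty, the recourse set is $\Omega(x_1)=\{x_2\geq 0, y_2 : A_2x_2+B_2y_2\leq g_2+Cx_1\}$, which does not depend on $u$ and is nonempty by full recourse. I will assume the recourse value is bounded for every $u\in U_{cvx}$ (relatively complete recourse with bounded costs, or slack penalties that make the problem bounded). Under these assumptions, LP strong duality gives
\begin{equation*}
\min_{(x_2,y_2)\in\Omega(x_1)} c_2(u)^Tx_2+b_2(u)^Ty_2 \;=\; \max_{\lambda\geq 0}\big\{-\lambda^T(g_2+Cx_1) : A_2^T\lambda \geq -c_2(u),\; B_2^T\lambda = -b_2(u)\big\}.
\end{equation*}
An alternative route avoids dualizing the recourse altogether. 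For fixed $(x_2,y_2)$ the objective is linear in $u$, and $U_{cvx}$ is compact and convex. The minimax theorem (Sion/von Neumann) lets me swap the max over $u$ and the min over $(x_2,y_2)$, because the recourse feasible set is independent of $u$. This swap is the key step, and it is exactly where the assumption that only costs are uncertain is used. After the swap, the problem reads
\begin{equation*}
\min_{(x_1,y_1)\in\Psi,\,(x_2,y_2)\in\Omega(x_1)} c_1^Tx_1+b_1^Ty_1+\max_{u\in U_{cvx}} \big(c_2(u)^Tx_2+b_2(u)^Ty_2\big),
\end{equation*}
which is a single-stage static robust LP.

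Next I write $c_2(u)=\overline{c}_2+\mathrm{diag}(\hat c_2)u$, and similarly for $b_2$. The inner problem is then $\max\{d(x_2,y_2)^Tu : \|u\|_1\leq\Gamma,\|u\|_\infty\leq 1\}$ with $d$ linear in $(x_2,y_2)$. I split $u$ into positive and negative parts, or equivalently bound $|d_i|$ by auxiliary variables $t_i\geq \pm d_i$. The inner problem is an LP with bounded, nonempty feasible set, so LP duality replaces it with $\min\{\Gamma z+\sum_i p_i : z+p_i\geq t_i,\ z,p\geq 0\}$. Substituting this min into the outer min gives one LP in the variables $(x_1,y_1,x_2,y_2,t,z,p)$, which proves the claim.

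The main obstacle is justifying the minimax swap. Sion's theorem needs one of the two sets to be compact, and the recourse set need not be. I would handle this by using compactness of $U_{cvx}$ together with bilinearity of the objective in $(u,(x_2,y_2))$, which is the standard form of Sion's theorem requiring only one compact convex set. I would also invoke boundedness of the recourse value to guarantee that the min is attained. A second, minor point is that the lemma's model might use the $l_0$ set $U$ rather than $U_{cvx}$. Here this costs nothing: the objective after the swap is linear in $u$, so the maximum over $U_{cvx}$ is attained at a vertex when $\Gamma$ is integer. Every vertex has at most $\Gamma$ nonzero entries, so it lies in $U$ and the two sets give the same value.
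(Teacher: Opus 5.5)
Your main argument follows the paper's proof. Because only costs are uncertain, the recourse set does not depend on $u$, so a minimax theorem swaps $\max_{u}$ and $\min_{(x_2,y_2)}$. The two minimizations then merge into a single-stage robust LP, and the $l_1$-budget inner maximization is dualized as in Bertsimas and Sim. Your justification of the swap is cleaner than the paper's, which simply takes $\Omega^*(x_1)$ to be bounded so that von Neumann's theorem applies. You need only compactness of $U_{cvx}$ in Sion's theorem. Attainment on the minimizing side then follows because $\max_{u\in U_{cvx}}$ of the objective is a polyhedral convex function of $(x_2,y_2)$, and such a function attains a finite infimum over a polyhedron. The recourse-dualization display at the start is never used and can be deleted.

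Your last paragraph, however, is wrong. It also contradicts a point the paper makes explicitly: the $l_1$ budget set is exact for single-stage problems but only a relaxation in two-stage ARO. Write $z=(x_2,y_2)$ and $f(u,z)$ for the recourse cost. Your vertex argument correctly shows $\min_{z}\max_{u\in U_{cvx}} f(u,z)=\min_{z}\max_{u\in U} f(u,z)$. But the two-stage model with the cardinality set is $\max_{u\in U}\min_{z} f(u,z)$, and the max--min swap is not available on the nonconvex set $U$. Since $u\mapsto\min_z f(u,z)$ is concave, its maximum over $U_{cvx}$ can lie at a non-vertex.

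For example, take recourse variables $z\geq 0$ with $z_1+z_2\geq 1$, costs $1+u_1$ and $1+u_2$, and $\Gamma=1$. The recourse value is $1+\min(u_1,u_2)$. Its maximum over $U$ is $1$, but over $U_{cvx}$ it is $3/2$, attained at $u=(1/2,1/2)$. So for the cardinality set your LP gives only a conservative upper bound, not an exact reformulation. State and prove the lemma for $U_{cvx}$ only, as the paper's proof does, and drop that claim.
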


Note that, later on in Theorem \ref{thm:LPSplit}, we do not formally use the conclusion of this Lemma, but rather apply similar logic as is used in this Lemma. We present the Lemma in this form, rather than a more esoteric form that could be directly applied to Theorem \ref{thm:LPSplit}, for ease of explanation.

\begin{proof}{Proof.}
    Assume we have a two-stage adaptive robust optimization formulation with uncertainties in the objective function coefficients only, as follows:
    \begin{align} \label{cost_only_formulation}
        \min_{(x_1,y_1) \in \Psi} \ \big( & c_1^T x_1 + b_1^Ty_1 + \max_{u_c \in U_c} \min_{(x_2,y_2) \in \Omega^*(x_1)} c_2(u_c)^Tx_2 + b_2(u_c)^Ty_2 \big) \\
        \text{where:} & \notag \\
        \Psi &= \{ x_1 \in \mathbb{R}^m_+, y_1 \in \mathbb{R}^n : A_1x_1+B_1y_1 \leq g_1\} \notag \\
        U_c &= \{ u_c \in \mathbb{R}^r : \| u_c \|_1 \leq \Gamma_c, \| u_c \|_{\infty} \leq 1 \} \notag \\
        \Omega^*(x_1) &= \{ x_2 \in  \mathbb{R}^m_+, y_2 \in  \mathbb{R}^n : A_2x_2+B_2y_2 \leq g_2 + Cx_1 \} \notag \\
        \Omega^*(x_1) &\neq \emptyset \  \forall \ (x_1,y_1) \in \Psi \notag
    \end{align}
    In this formulation, $U_c$ is convex and compact, and $\Omega^*(x_1)$ is a bounded polyhedral set, which combined with the assumption of full recourse means that $\Omega^*(x_1)$ is also convex and compact. Further, $c_2(u_c)^Tx_2 + b_2(u_c)^Ty_2$ is bilinear in $(u_c,(x_2,y_2))$. Therefore, by Von Neumann's Minimax theorem, we have that for fixed $x_1$, $\max_{u_c \in U_c} \min_{(x_2,y_2) \in \Omega^*(x_1)} c_2(u_c)^Tx_2 + b_2(u_c)^Ty_2 = \min_{(x_2,y_2) \in \Omega^*(x_1)} \max_{u_c \in U_c} c_2(u_c)^Tx_2 + b_2(u_c)^Ty_2$. Therefore Problem \ref{cost_only_formulation} is equivalent to the following problem:

    \begin{align} \label{max_min_flip}
        \min_{(x_1,y_1) \in \Psi} \ \big( & c_1^T x_1 + b_1^Ty_1 + \min_{(x_2,y_2) \in \Omega^*(x_1)} \max_{u_c \in U_c}  c_2(u_c)^Tx_2 + b_2(u_c)^Ty_2 \big)
    \end{align}

    Since $\min_{(x_2,y_2) \in \Omega^*(x_1)}$ is a convex function of $x_1$ (see 5.6.1 in \cite{boyd_convex_2004}),  we can further combine the two $\min$ operators to arrive at:

    \begin{align} \label{combined_mins}
        \min_{(x_1,y_1) \in \Psi, (x_2,y_2) \in \Omega^*(x_1)} \ \big( & c_1^T x_1 + b_1^Ty_1 + \max_{u_c \in U_c}  c_2(u_c)^Tx_2 + b_2(u_c)^Ty_2 \big)
    \end{align}

    Which is a single-stage robust optimization problem with uncertain objective function coefficients. Per \cite{bertsimas_price_2004}, we can move the objective function to a constraint, and separate the nominal values of $c_2$ and $b_2$ from their deviations to arrive at:

    \begin{align} \label{moved_to_constraint}
        \min & \ t \\
        \text{subject to:} & \notag \\
        & c_1^T x_1 + b_1^Ty_1 + \overline{c}_2^Tx_2 + \overline{b}_2^Ty_2 + \max_{u_c \in U_c} [ (\hat{c}_2 \cdot u_c)^Tx_2 + (\hat{b}_2 \cdot u_c)^Ty_2] \leq t \notag \\
        & (x_1,y_1) \in \Psi, (x_2,y_2) \in \Omega^*(x_1) \notag
    \end{align}

    Via strong duality as in \cite{bertsimas_price_2004}, this can be equivalently written as a linear program:

    \begin{align} \label{lp_cost_final_form}
        \min & \ t \\
        \text{subject to:} & \notag \\
        & c_1^T x_1 + b_1^Ty_1 + \overline{c}_2^Tx_2 + \overline{b}_2^Ty_2 + p \Gamma + q^T \mathbf{1} \leq t \notag \\
        & (\hat{c}_2 \cdot x_2) + (\hat{b}_2 \cdot y_2) \leq p \mathbf{1} + q \notag \\
        & p,q \geq 0 \notag \\
        & (x_1,y_1) \in \Psi, (x_2,y_2) \in \Omega^*(x_1) \notag
    \end{align}

    This concludes the proof. We conclude that linear two-stage ARO with only cost-based uncertainties represented with an $l_1$-norm budget uncertainty set are equivalent to single-stage robust optimization problems. This fact forms the foundation of the formulation presented in this paper.
\end{proof}

Note that the above proof follows similar reasoning as was introduced in \cite{arslan_decomposition-based_2022}.

\begin{lemma}[LP reformulation of two-stage ARO via scenarios]\label{thm:LPScenarios}
    Two-stage adaptive robust optimization can be equivalently reformulated via scenarios, if all scenarios in the uncertainty set are explicitly enumerated.
\end{lemma}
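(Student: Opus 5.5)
I will work with the general two-stage problem (\ref{aro}) over a finite uncertainty set $U=\{u^1,\dots,u^S\}$, where each $u^s$ is an explicitly enumerated scenario. Such a set arises, for example, as the extreme points of $\{u:\|u\|_0\le\Gamma,\|u\|_\infty\le 1\}$ once the deviations are restricted to $\{-1,0,1\}$, or as the finitely many realizations a modeler chooses to list. The plan is to replace the inner $\max_{u\in U}$ by an epigraph variable $t$ and one copy of the recourse problem per scenario. This gives the extensive form
\begin{align*}
\min_{t,(x_1,y_1)\in\Psi,\{(x_2^s,y_2^s)\}_{s=1}^S} \ & c_1^Tx_1+b_1^Ty_1+t \\
\text{subject to:} \ & t \ge c_2(u^s)^Tx_2^s+b_2(u^s)^Ty_2^s, \quad (x_2^s,y_2^s)\in\Omega(x_1,u^s), \quad s=1,\dots,S.
\end{align*}
Every constraint is linear in the decision variables because $u^s$ is fixed data, so $c_2(u^s)$ and $g_2(u^s)$ are constants. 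The reformulation is therefore an LP.

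To prove equivalence, I will fix a feasible $(x_1,y_1)\in\Psi$ and define $Q_s(x_1)=\min_{(x_2,y_2)\in\Omega(x_1,u^s)} c_2(u^s)^Tx_2+b_2(u^s)^Ty_2$. By full recourse each $Q_s$ is finite; boundedness of the recourse polyhedron, as in Lemma~\ref{thm:LPCost}, rules out $-\infty$. Since $U$ is finite, the max is attained and the original objective equals $c_1^Tx_1+b_1^Ty_1+\max_s Q_s(x_1)$.

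The two directions of the equivalence then go as follows.
\begin{itemize}
\item \textbf{Reformulation bounds the original.} For any feasible point of the extensive form, $t\ge c_2(u^s)^Tx_2^s+b_2(u^s)^Ty_2^s\ge Q_s(x_1)$ for every $s$. Hence its objective is at least the original objective at the same $x_1$.
\item \textbf{Original bounds the reformulation.} Choose $(x_2^s,y_2^s)$ as a minimizer of each scenario subproblem and set $t=\max_s Q_s(x_1)$. This point is feasible for the extensive form and attains the original value.
\end{itemize}
Minimizing over $(x_1,y_1)$ then gives equal optimal values. Any optimal solution of one problem yields an optimal solution of the other, with the first-stage decisions identical.

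The key step is justifying that the second-stage decisions may be indexed by scenario. This is exactly the adaptivity of the ARO: the recourse is a function of $u$, and restricted to a finite set such a function is just the finite collection $\{(x_2^s,y_2^s)\}_s$. I need no minimax interchange here, unlike Lemma~\ref{thm:LPCost}, so nonconvexity of $U$ (the $l_0$ constraint) is harmless.

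The main subtlety I expect is what "all scenarios in the uncertainty set" means when $U$ is the cardinality set in (\ref{aro}), which is infinite. I would handle this by arguing that for RHS-only uncertainty the worst case is attained at a vertex of the convex hull of $U$. The value function $Q(x_1,u)$ is convex in $u$ when $u$ enters linearly on the right-hand side, as the LP value function (cf.\ 5.6.1 in \cite{boyd_convex_2004}). A convex function on a polytope attains its maximum at a vertex. The vertices of $\mathrm{conv}\{u:\|u\|_0\le\Gamma,\|u\|_\infty\le1\}$ (integer $\Gamma$) are the vectors with at most $\Gamma$ entries in $\{\pm1\}$ and the rest $0$, a finite list. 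So enumerating these extreme points suffices, and the argument above applies verbatim.
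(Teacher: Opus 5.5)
Your proof is correct and follows the paper's route: the paper also introduces the epigraph variable $t$, writes one constraint per enumerated $u_i$ with scenario-indexed copies $(x_{2i},y_{2i})$ of the recourse variables, and then drops the inner $\min$ to obtain the extensive-form LP. Your two-sided value comparison makes the ``drop the $\min$'' step rigorous, and your vertex reduction for RHS-only uncertainty is an addition the paper does not make, since it simply assumes $U$ can be enumerated; that reduction would not carry over verbatim when $u$ also enters $c_2(u),b_2(u)$, because the value function need not then be convex in $u$, but the lemma as stated does not need it.
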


Note that while this reformulation is exact, the number of scenarios grows combinatorially with the number of uncertain parameters.

\begin{proof}{Proof.}
Assume we have a standard two-stage robust optimization problem as in Problem \ref{aro}, and we can enumerate the elements of $U$ as $u_i$. Then we can use the standard LP reformulation of a $\max$ operator to move the second stage objective function to a series of constraints, one for each $u_i$ (note $x_{2i}$ and $y_{2i}$ are copies of $x_2$ and $y_2$, one for each $u_i$, not subsets of $x_2$ and $y_2$):

\begin{align} \label{aro_scenarios}
    \min_{(x_1,y_1) \in \Psi} \ & c_1^T x_1 + b_1^Ty_1 + t \\
    \text{subject to:} & \notag \\
    \min_{(x_{2i},y_{2i}) \in \Omega(x_1,u_i)} & [c_2(u_i)^Tx_{2i} + b_2(u_i)^Ty_{2i}] \leq t \ \forall \ u_i \in U \notag
\end{align}

Since this is a minimization problem inside a minimization problem, we can further delete the min operator (as is common with the standard LP formulation of single-stage robust optimization) but keep its constraints to obtain:

\begin{align} \label{aro_scenarios_final}
    \min_{(x_1,y_1) \in \Psi} \ & c_1^T x_1 + b_1^Ty_1 + t \\
    \text{subject to:} & \notag \\
    & c_2(u_i)^Tx_{2i} + b_2(u_i)^Ty_{2i} \leq t \ \forall \ u_i \in U \notag \\
    & (x_{2i},y_{2i}) \in \Omega(x_1,u_i) \ \forall  \ i \notag
\end{align}

Which is a linear program. This concludes the proof.
\end{proof}

\begin{remark}
This linear program might be very large and intractable if the number of elements of $U$ is large; in other words, this reformulation does not change the fact that two-stage ARO is NP-hard. In our case, the reformulation is only useful if the number of elements of $U$ is manageably small.
\end{remark}

Armed with these two lemmas, we now present and prove the theorem that underlies our split uncertainty budget formulation:

\begin{theorem}[LP-representation of split-budget formulation]\label{thm:LPSplit}
The split-uncertainty-budget formulation presented in Problem \ref{aro_split} can be formulated as a linear program.
\end{theorem}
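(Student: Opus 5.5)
The plan is to combine the two lemmas. We first enumerate the right-hand-side uncertainty with scenarios, in the spirit of Lemma \ref{thm:LPScenarios}. Within each scenario, we then remove the cost uncertainty by the minimax exchange and dualization argument of Lemma \ref{thm:LPCost}.

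\textbf{Step 1: finite scenario set.} The interior of $U_{RHS}$ is irrelevant. For fixed $x_1$ and $u_c$, the value function $\min_{(x_2,y_2)\in\tilde\Omega(x_1,u_{RHS})} c_2(u_c)^Tx_2+b_2(u_c)^Ty_2$ is convex in $u_{RHS}$, since it is an LP value function of an affine right-hand side. After the $\max_{u_c}$ has been taken, the same argument as in Lemma \ref{thm:LPCost} rewrites the value as a $\min$ over $(x_2,y_2)$ of a convex function. That expression is still a jointly convex function of $(x_2,y_2,u_{RHS})$ minimized over a polyhedron, so it remains convex in $u_{RHS}$.

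The set $U_{RHS}$ is a finite union of boxes, one for each support of size at most $\Gamma_{RHS}$. A convex function attains its maximum over each box at a vertex. Hence the maximum over $U_{RHS}$ equals the maximum over the finite set $\{u_{RHS,i}\}$ of $\{-1,0,1\}$-vectors with at most $\Gamma_{RHS}$ nonzeros. If the modeler instead defines $U_{RHS}$ directly by a listed scenario set, as the text suggests, this step is immediate.

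\textbf{Step 2: epigraph over scenarios.} Following Lemma \ref{thm:LPScenarios}, we introduce $t$ and, for each $i$, the constraint
\[
\max_{u_c\in U_c}\min_{(x_{2i},y_{2i})\in\tilde\Omega(x_1,u_{RHS,i})} c_2(u_c)^Tx_{2i}+b_2(u_c)^Ty_{2i}\le t .
\]
Each $i$ carries its own copies $(x_{2i},y_{2i})$ of the second-stage variables.

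\textbf{Step 3: minimax exchange per scenario.} For each fixed $i$, we apply the reasoning of Lemma \ref{thm:LPCost}. The set $U_c$ is convex and compact. The set $\tilde\Omega(x_1,u_{RHS,i})$ is nonempty by full recourse, and we assume it is bounded, as that lemma does. The objective is bilinear. Von Neumann's minimax theorem therefore lets us swap $\max_{u_c}$ and $\min$.

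Since every scenario constraint has the form ``$\min_{(x_{2i},y_{2i})}(\cdot)\le t$'' inside an outer minimization, we can drop the inner $\min$ and keep its constraints. Doing so scenario by scenario is legitimate because the copies are independent. This yields the condition $\overline c_2^Tx_{2i}+\overline b_2^Ty_{2i}+\max_{u_c\in U_c}[(\hat c_2\cdot u_c)^Tx_{2i}+(\hat b_2\cdot u_c)^Ty_{2i}]\le t$ together with $(x_{2i},y_{2i})\in\tilde\Omega(x_1,u_{RHS,i})$.

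\textbf{Step 4: dualize the cost uncertainty.} We replace each inner maximum by its LP dual as in \cite{bertsimas_price_2004}, using per-scenario multipliers $p_i\ge 0$ and $q_i\ge 0$. The constraints become $\overline c_2^Tx_{2i}+\overline b_2^Ty_{2i}+p_i\Gamma_c+q_i^T\mathbf 1\le t$ and $(\hat c_2\cdot x_{2i})+(\hat b_2\cdot y_{2i})\le p_i\mathbf 1+q_i$. One caveat: if $x_{2i},y_{2i}$ are not sign-restricted, the deviation terms need absolute values, handled by the standard auxiliary variables. The result is a single LP in $(x_1,y_1,t,\{x_{2i},y_{2i},p_i,q_i\}_i)$.

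\textbf{Main obstacle.} The delicate point is Step 1 and its interplay with Step 3. The exchange of $\max_{u_{RHS}}$ and $\max_{u_c}$ is trivial, but we must ensure that the nested $\max_{u_c}\min$ composite is still convex in $u_{RHS}$ so that vertex enumeration is valid. We also need the compactness hypotheses for the minimax theorem. I would first try to sidestep the convexity argument by treating $U_{RHS}$ as the explicitly enumerated scenario set the paper intends. Only if needed would I give the convexity argument: joint convexity of the max-over-$u_c$ objective in $(x_2,y_2)$ combined with polyhedral dependence on $u_{RHS}$.
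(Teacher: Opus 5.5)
Your proposal is correct and follows essentially the same route as the paper. Both arguments introduce an epigraph variable over an enumerated set of right-hand-side scenarios with independent second-stage copies, swap $\max_{u_c}$ and $\min$ scenario by scenario before dropping the inner $\min$, and dualize the cost uncertainty per scenario with multipliers $p_i,q_i\ge 0$.

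Your Step 1 and your absolute-value caveat in Step 4 make explicit two points the paper's proof passes over. The first is why the infinite set $U_{RHS}$ may be replaced by its finitely many $\{-1,0,1\}$ vertices; the second is the handling of sign-unrestricted variables in the Bertsimas--Sim dualization. The convexity in $u_{RHS}$ you need in Step 1 also follows more directly, as a pointwise maximum over $u_c$ of LP value functions, without appealing to the minimax swap.
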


\begin{proof}{Proof.}
As in Lemma \ref{thm:LPScenarios}, we can first reformulate the $\max_{u_{RHS} \in U_{RHS}}$ operator with the help of scenarios:

\begin{align} \label{aro_split_step1}
    \min_{(x_1,y_1) \in \Psi} \ & c_1^T x_1 + b_1^Ty_1 + t \\
    \text{subject to:} & \notag \\
    & \max_{u_c \in U_c} \min_{(x_{2i},y_{2i}) \in \tilde{\Omega}(x_1,u_i)} [c_2(u_c)^Tx_{2i} + b_2(u_c)^Ty_{2i}] \leq t \ \forall \ u_i \in U_{RHS} \notag
\end{align}

We can then flip the max and the min as in Lemma \ref{thm:LPCost}, and drop the min operator, to arrive at:

\begin{align} \label{aro_split_step2}
    \min_{(x_1,y_1) \in \Psi} \ & c_1^T x_1 + b_1^Ty_1 + t \\
    \text{subject to:} & \notag \\
    & \max_{u_c \in U_c}  [c_2(u_c)^Tx_{2i} + b_2(u_c)^Ty_{2i}] \leq t \ \forall \ u_i \in U_{RHS} \notag \\
    & (x_{2i},y_{2i}) \in \tilde{\Omega}(x_1,u_i) \ \forall \ i \notag
\end{align}

This is now a linear program with uncertain constraint coefficients, which as in Lemma \ref{thm:LPCost} can be reformulated as:

\begin{align} \label{aro_split_step3}
    \min_{(x_1,y_1) \in \Psi} \ & c_1^T x_1 + b_1^Ty_1 + t \\
    \text{subject to:} & \notag \\
    & \overline{c}_2^T x_{2i} + \overline{b}_2^T y_{2i} + p_i \Gamma_c + q_i^T \mathbf{1} \leq t \ \forall \ i \notag \\
    & (\hat{c}_2 \cdot x_{2i}) + (\hat{b}_2 \cdot y_{2i}) \leq p_i \mathbf{1} + q_i  \ \forall \ i\notag \\
    & (x_{2i},y_{2i}) \in \tilde{\Omega}(x_1,u_i) \ \forall \ i \notag \\
    & p,q \geq 0 \notag
\end{align}

This concludes the proof.

\end{proof}

\begin{remark}
    An intuitive way to interpret this formulation is that it is an LP capacity expansion planning problem featuring uncertain objective function cost coefficients and scenarios representing additional second-stage uncertainties in right-hand-side constraint parameters (such as constraints on resource availability or maximum deployable capacity), with the maximum of those scenarios entering the objective function.
\end{remark}

\subsection{Advantages}\label{subsec:Advantages}

The primary advantage of this formulation is that it is highly scalable for cost uncertainties in particular. As opposed to standard two-stage ARO, which is tractable for many problems but not scalable enough to use for large-scale capacity expansion, many cost-based uncertainties can be modeled in this formulation, with minimal impact on computational times, as we will later see in the case study. In this manner, the formulation leverages partial convexities of two-stage ARO that specifically apply to cost-based uncertainties, thus enabling scalability for real-world problem instances.

It is important to note, however, that this formulation scales combinatorially with regard to the number of non-cost scenarios (e.g. uncertainties in right-hand-side values) modeled, making it still NP-hard with respect to non-cost uncertainties. However, as we will demonstrate in the case study, this is not a limitation for many real-world problems, as long as decision-makers are concerned with only a few ($\sim 5-10$) non-cost, non-weather-based uncertainties.

\subsection{A note on unavailability as high cost}\label{subsec:CostNote}

One question that naturally arises when considering this formulation is: if cost-based uncertainties can be dealt with cleanly but right-hand-side-based uncertainties cannot, why not convert right-hand-side uncertainties to cost uncertainties using objective function penalties? For example, consider a two-stage ARO problem with uncertainty only in the availability of second-stage investment decisions:

\begin{align} \label{aro_rhsonly}
    \min_{(x_1,y_1) \in \Psi} \ \big( & c_1^T x_1 + b_1^Ty_1 + \max_{u \in U} \min_{(x_2,y_2) \in \Omega(x_1,u)} c_2^Tx_2 + b_2^Ty_2 \big) \\
    \text{where:} & \notag \\
    \Psi &= \{ x_1 \in \mathbb{R}^m_+, y_1 \in \mathbb{R}^n : A_1x_1+B_1y_1 \leq g_1\} \notag \\
    U &= \{ u \in \mathbb{R}^p : \| u \|_0 \leq \Gamma, 0 \leq \| u \|_{\infty} \leq 1 \} \notag \\
    \Omega(x_1, u) &= \{ x_2 \in  \mathbb{R}^m_+, y_2 \in  \mathbb{R}^n : A_2x_2+B_2y_2 \leq g_2 + Cx_1, x_2 \leq \overline{x}_2(1-u) \} \notag \\
    \Omega(x_1, u) &\neq \emptyset \  \forall \ (x_1,y_1) \in \Psi, u \in U \notag
\end{align}

Where $\overline{x}_2$ is the nominal maximum availability for $x_2$. Intuitively, this is a two stage adaptive robust capacity expansion problem where $\Gamma$ of the second-stage investments are assumed to be unavailable in each worst case within the uncertainty set. One can easily form an equivalent problem with objective function uncertainties only by using a penalty term $M$ to move the constraint $x_2 \leq \overline{x}_{2}(1-u)$ to the objective:

\begin{align} \label{aro_rhsonly_penaltyform}
    \min_{(x_1,y_1) \in \Psi} \ \big( & c_1^T x_1 + b_1^Ty_1 + \max_{u \in U} \min_{(x_2,y_2) \in \Omega(x_1)} c_2^Tx_2 + b_2^Ty_2 + Mu^Tx_2 \big) \\
    \text{where:} & \notag \\
    \Psi &= \{ x_1 \in \mathbb{R}^m_+, y_1 \in \mathbb{R}^n : A_1x_1+B_1y_1 \leq g_1\} \notag \\
    U &= \{ u \in \mathbb{R}^p : \| u \|_0 \leq \Gamma, 0 \leq \| u \|_{\infty} \leq 1 \} \notag \\
    \Omega(x_1) &= \{ x_2 \in  \mathbb{R}^m_+, y_2 \in  \mathbb{R}^n : A_2x_2+B_2y_2 \leq g_2 + Cx_1, x_2 \leq \overline{x}_2 \} \notag \\
    \Omega(x_1) &\neq \emptyset \  \forall \ (x_1,y_1) \in \Psi \notag \\
    M &>> 1 \notag
\end{align}

Where M is a large scalar sufficiently high to enforce $u^Tx_2=0$. This ensures the complementarity of $u$ and $x_2$: if $u_i=1$, the corresponding $x_{2i}$ will be equal to zero, and vice versa.

The above formulation now has objective function uncertainties only, but it is still not convex due to the cardinality constraint on $u$, $\| u \|_0 \leq \Gamma$. We can use the $l_1$-norm convex relaxation of this constraint, as is standard with two-stage ARO, to obtain the following convex problem:

\begin{align} \label{aro_rhsonly_penaltyform_convex}
    \min_{(x_1,y_1) \in \Psi} \ \big( & c_1^T x_1 + b_1^Ty_1 + \max_{u \in U_{cvx}} \min_{(x_2,y_2) \in \Omega(x_1)} c_2^Tx_2 + b_2^Ty_2 + Mu^Tx_2 \big) \\
    \text{where:} & \notag \\
    \Psi &= \{ x_1 \in \mathbb{R}^m_+, y_1 \in \mathbb{R}^n : A_1x_1+B_1y_1 \leq g_1\} \notag \\
    U_{cvx} &= \{ u \in \mathbb{R}^p : \| u \|_1 \leq \Gamma, \| u \|_{\infty} \leq 1 \} \notag \\
    \Omega(x_1) &= \{ x_2 \in  \mathbb{R}^m_+, y_2 \in  \mathbb{R}^n : A_2x_2+B_2y_2 \leq g_2 + Cx_1, x_2 \leq \overline{x}_2 \} \notag \\
    \Omega(x_1) &\neq \emptyset \  \forall \ (x_1,y_1) \in \Psi \notag \\
    M &>> 1 \notag
\end{align}

Which as per Lemma \ref{thm:LPCost} can be equivalently formulated as a single-stage LP. Intuitively, this formulation appears to be equivalent to the original problem (\ref{aro_rhsonly}), as it represents using cost penalties to model unavailability of second-stage investments. However, upon further inspection, it is actually a relaxation that is unlikely to be tight in general, meaning it is an upper bound that may be overly conservative. Theorem \ref{thm:relaxation} shows this:

\begin{theorem}[Convex relaxation of two-stage ARO]\label{thm:relaxation}
Problem (\ref{aro_rhsonly_penaltyform_convex}), the convex relaxation of (\ref{aro_rhsonly_penaltyform}), is not tight for large M under some mild assumptions that are likely to be met in real-world applications.
\end{theorem}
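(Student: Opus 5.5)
The plan is to exhibit the gap through the worst-case $u$ that the convex relaxation selects, which can be fractional. By Lemma~\ref{thm:LPCost}, (\ref{aro_rhsonly_penaltyform_convex}) equals the single-stage problem in which the inner value for fixed $x_1$ is
\[
\phi_{cvx}(x_1)=\min_{(x_2,y_2)\in\Omega(x_1)} \Big( c_2^Tx_2+b_2^Ty_2+\max_{u\in U_{cvx}} M u^T x_2 \Big).
\]
For $x_2\geq 0$ and integer $\Gamma$, the inner maximum over $U_{cvx}$ is $M$ times the sum of the $\Gamma$ largest entries of $x_2$. This term is linear in $M$ and does not vanish unless $x_2$ has at most $\Gamma$ nonzero entries summing to zero, that is, unless $x_2=0$.

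By contrast, in (\ref{aro_rhsonly_penaltyform}) with $u$ binary, for each fixed $u$ the inner minimum sets the $\Gamma$ selected components of $x_2$ to zero. As $M\to\infty$ the value therefore converges to the true robust value of (\ref{aro_rhsonly}),
\[
\phi(x_1)=\max_{|S|\le\Gamma}\ \min\{c_2^Tx_2+b_2^Ty_2 : x_{2,S}=0\},
\]
which is finite and independent of $M$ by full recourse.

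The first step is to make precise why binary $u$ is not worst-case for the relaxation once the max and min are swapped. Fix $x_2$. A maximizer $u$ over $U_{cvx}$ may indeed be chosen at a vertex. However, the saddle point of the bilinear game guaranteed by the minimax argument in Lemma~\ref{thm:LPCost} is generally mixed: the maximizer can spread $u$ fractionally, for instance $u_j=\Gamma/p$ on all $p$ components. The minimizer then cannot zero out every penalized component at once and pays order $M$ on whatever investment it keeps. So the relaxed inner value grows roughly like $M\cdot\min_{x_2\text{ feasible}}(\text{sum of the }\Gamma\text{ largest entries of }x_2)$.

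The second step is to state the mild assumption under which this minimum is strictly positive for the relevant $x_1$. A natural candidate is that recourse without any second-stage investment is infeasible, or else carries a cost that is not dominated by the penalty. Concretely, I would assume that for some optimal (or every) $x_1\in\Psi$, every $(x_2,y_2)\in\Omega(x_1)$ satisfies $x_2\neq 0$; for example, demand growth forces some new build, which holds in realistic capacity expansion. I would also assume $\Gamma<p$ and that at least $\Gamma+1$ investment options can individually cover the need, so that the true problem (\ref{aro_rhsonly}) remains feasible with value independent of $M$.

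With these assumptions, the bound $\phi_{cvx}(x_1)\ge M\,\delta$ holds with $\delta>0$, using compactness of $\Omega(x_1)$ (or of the relevant first-stage set). The conclusion then follows directly: $\phi_{cvx}(x_1)\to\infty$ as $M\to\infty$, whereas $\phi(x_1)$ stays bounded, so the relaxation is strictly larger and hence not tight for large $M$.

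The main obstacle is getting a uniform positive lower bound $\delta$ over all first-stage choices $x_1$, since the outer minimizer could try to shift investment to the first stage so that $x_2=0$ becomes feasible. I would handle this in one of two ways: assume explicitly that not all second-stage needs can be covered in stage one, or else compare optimal values directly and note that the relaxation's optimal value is then the cost of forcing $x_2=0$. That cost strictly exceeds the true robust value whenever deferring some investment is strictly beneficial, which gives non-tightness in either case.
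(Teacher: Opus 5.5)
Your proposal is correct, but it takes a different route from the paper's proof.

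\textbf{The paper's route.} The paper argues locally at the original optimum. It starts from the binary worst case $u^*$ of (\ref{aro_rhsonly_penaltyform}) and moves a little budget from some $u_i=1$ onto a component $u_j=0$ with $x^*_{2j}>0$. The resulting $\tilde u$ lies in $U_{cvx}$ but not in $U$, and the paper argues that the inner minimum at $\tilde u$ exceeds the one at $u^*$. So the relaxed worst case at $x_1^*$ is larger.

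\textbf{Your route.} You use Lemma~\ref{thm:LPCost} to swap max and min, and compute the inner maximum in closed form as $M$ times the sum of the $\Gamma$ largest entries of $x_2$. For fixed $x_2$ this is also the maximum over $U$. Your computation therefore shows the entire gap is the min--max versus max--min gap of the nonconvex game, and that the relaxation acts as a penalty on all second-stage investment.

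\textbf{What each buys.} Your route gives a sharper, global conclusion. For large $M$ the relaxed optimal value lies between the true robust value of (\ref{aro_rhsonly}) and the value of the best policy with $x_2\equiv 0$, and it tends to the latter. Hence the relaxation is non-tight for large $M$ exactly when second-stage investment is strictly valuable in (\ref{aro_rhsonly}). This compares optimal values, not just objective values at a fixed $x_1^*$. It also avoids a step in the paper where a strict increase of the inner minimum at $\tilde u$ is asserted, although only a weak increase is immediate. The paper's argument is more elementary, and its mild assumption ($x_2^*\neq 0$) is phrased directly in terms of the original optimum.

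\textbf{One adjustment.} Your primary assumption, that $x_2=0$ is infeasible in $\Omega(x_1)$, conflicts with how the paper secures full recourse. The paper uses slack (non-served energy) variables in the balancing constraints, which sit in $y_2$ and make $x_2=0$ always feasible. In the paper's setting it is therefore your fallback branch that carries the proof, so make it the main line. Justify its key step with the exact-penalty property of linear programs: the relaxed value is nondecreasing in $M$, bounded above by the forced-$x_2=0$ value, and equal to it for $M$ large enough. With that, the compactness-based $\delta$ bound is no longer needed.
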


\begin{proof}{Proof.}

Assume we have a solution $(x_1^*,y_1^*,u^*,x_2^*,y_2^*)$ that is optimal to (\ref{aro_rhsonly}), and by extension (\ref{aro_rhsonly_penaltyform}) if M is sufficiently large to enforce $u^{*T}x_2^*=0$. Assume M is large enough such that if it were to be decreased slightly, the solution to (\ref{aro_rhsonly_penaltyform}) would not change. Also assume that there is at least one nonzero element of $x_2^*$ at optimality (this is the mild assumption referenced above); we pick one such element and refer to it as $x_{2j}$. Due to the structure of the budget uncertainty set, at optimality we will have $\Gamma$ elements of $u^*$ equal to 1 and $p-\Gamma$ elements equal to 0. Now, let $\tilde{u}$ be the same as $u^*$ except with one of its elements $u_i$ that is equal to 1 decreased slightly, and another element $u_j$ that is equal to 0, corresponding to the nonzero $x_{2j}$, increased slightly. This less-sparse $\tilde{u}$ is within the set uncertainty $U_{cvx}$ used in (\ref{aro_rhsonly_penaltyform_convex}) but not in the uncertainty set $U$ used in (\ref{aro_rhsonly_penaltyform}) and (\ref{aro_rhsonly}). If we substitute $\tilde{u}$ for $u^*$ in (\ref{aro_rhsonly_penaltyform_convex}), the objective function will become higher, because $u_j*x_{2j}>0$, before any adaptations by the inner minimization operator. In response to this new $u=\tilde{u}$, the inner minimization will keep all of the elements of $x_2$ that were previously 0 in $x_2^*$ the same, because M is sufficiently high to force this, and will either: a) keep $x_{2j}$ the same, thus arriving at the same solution set $(x_1^*,y_1^*,x_2^*,y_2^*)$, but with a higher objective or b) reduce $x_{2j}$ to zero to avoid the new penalty induced by $u_j*x_{2j}>0$, which can only increase the objective. Therefore, given an optimal point to (\ref{aro_rhsonly_penaltyform}), we can construct a feasible point to (\ref{aro_rhsonly_penaltyform_convex}) with a higher objective function value, meaning the relaxation in (\ref{aro_rhsonly_penaltyform_convex}) is not tight.
\end{proof}

\begin{remark}
It is possible to formulate this relaxation more generally as: \begin{align} \label{aro_rhs_general}
    \min_{(x_1,y_1) \in \Psi} \ \big( & c_1^T x_1 + b_1^Ty_1 + \max_{u \in U_{cvx}} \min_{(x_2,y_2,z) \in \Omega(x_1)} c_2^Tx_2 + b_2^Ty_2 + Mu^Tz \big) \\
    \text{where:} & \notag \\
    \Psi &= \{ x_1 \in \mathbb{R}^m_+, y_1 \in \mathbb{R}^n : A_1x_1+B_1y_1 \leq g_1\} \notag \\
    U_{cvx} &= \{ u \in \mathbb{R}^p : \| u \|_1 \leq \Gamma, \| u \|_{\infty} \leq 1 \} \notag \\
    \Omega(x_1) &= \{ x_2 \in  \mathbb{R}^m_+, y_2 \in  \mathbb{R}^n, z \in \mathbb{R}^p : A_2x_2+B_2y_2-Cx_1 \leq (\overline{g}_2-\hat{g}_2)+\hat{g}_2 \cdot z \} \notag \\
    \Omega(x_1) &\neq \emptyset \  \forall \ (x_1,y_1) \in \Psi \notag \\
    M &>> 1 \notag
\end{align}

Where $z$ is an auxiliary variable in the same space as $u$, $\overline{g}_2$ is the nominal value vector of $g_2$, and $\hat{g}_2$ is the vector of maximum deviations from $\overline{g}_2$. We choose to present the relaxation in the more restricted and simplified form (\ref{aro_rhsonly_penaltyform_convex}) for ease of explanation.
\end{remark}

The conclusion of Theorem \ref{thm:relaxation} is of course not surprising given that two-stage ARO is an NP-hard formulation. However, this conclusion does not mean that this relaxation is not useful. With properly tuned $M$, it could be a useful relaxation, similar to other relaxations with tunable parameters. The extent to which it is a useful relaxation with properly tuned $M$ is an important area for future research. This observation that it is unlikely to be tight, however, helps justify the approach taken in the split-budget formulation in this paper, where we separate out cost and non-cost uncertainties, rather than attempt to reformulate some non-cost uncertainties (for example, availability constraints) as cost uncertainties. In our split budget formulation, $u_c$ is not guaranteed to have $\|u_c\|\leq \Gamma$, due to the same underlying factors driving the proof of Theorem \ref{thm:relaxation}, but this is not problematic as long as our uncertainty set represents true cost uncertainties (meaning a more sparse solution would still be within the desired uncertainty set), rather than cost penalties that are being used to parametrize availability uncertainty.

\section{Real-world case study}\label{sec:CaseStudy}

In this section we describe a case study of our formulation, in which it is applied to a real-world, large-scale capacity expansion problem for the state of California. We use the open-source GenX model \citep{jenkins_enhanced_2017, bonaldo_genx_2024} and develop an instance of the model that is benchmarked to replicate the RESOLVE model used by the California Public Utilities Commission for its Integrated Resource Planning process, which informs generation and transmission planning for the State and its load-serving entities \citep{california_public_utilities_commission_inputs_2023}. Our case study has two parts: the first part demonstrates the impact of adding robustness to the standard deterministic model, without our split-budget formulation, and the second part compares our split-budget formulation to the standard formulation and evaluates its performance on both outcomes and computational intensity.

\subsection{Deterministic model development and benchmarking}\label{subsec:ModelDev}

To establish our case study, we use the publicly available RESOLVE model data directly, meaning the deterministic version of our instance of GenX has exactly the same input data as the RESOLVE model. We make a slight adjustment to the model for use in our case study, in which we remove many of the near term resource builds that are ``forced-in'' (by exogenous constraint) in the CPUC IRP version of RESOLVE. In our case study, we only include forced-in builds corresponding to resources under contract by load-serving entities. We make this adjustment because many of these ``forced-in'' resources are far from certain to be built, meaning they do not necessarily belong in a least-cost optimization, and because the volume of forced-in resources is so significant that they over-constrain the model, making it unlikely for structural model changes, such as the ones introduced by this study, to substantially impact near-term capacity decisions. Figure \ref{fig:benchmarking} demonstrates the benchmarking of our GenX case to RESOLVE, and shows that resource builds in the snapshot 2035 year are nearly exactly the same between the two models. It also shows the impact of adjusting the aforementioned minimum build constraints; the main result is that approximately 5 gigawatts (GW) less of battery storage is included when these constraints are modified.

\begin{figure}
     \FIGURE
    {\includegraphics[width=\textwidth]{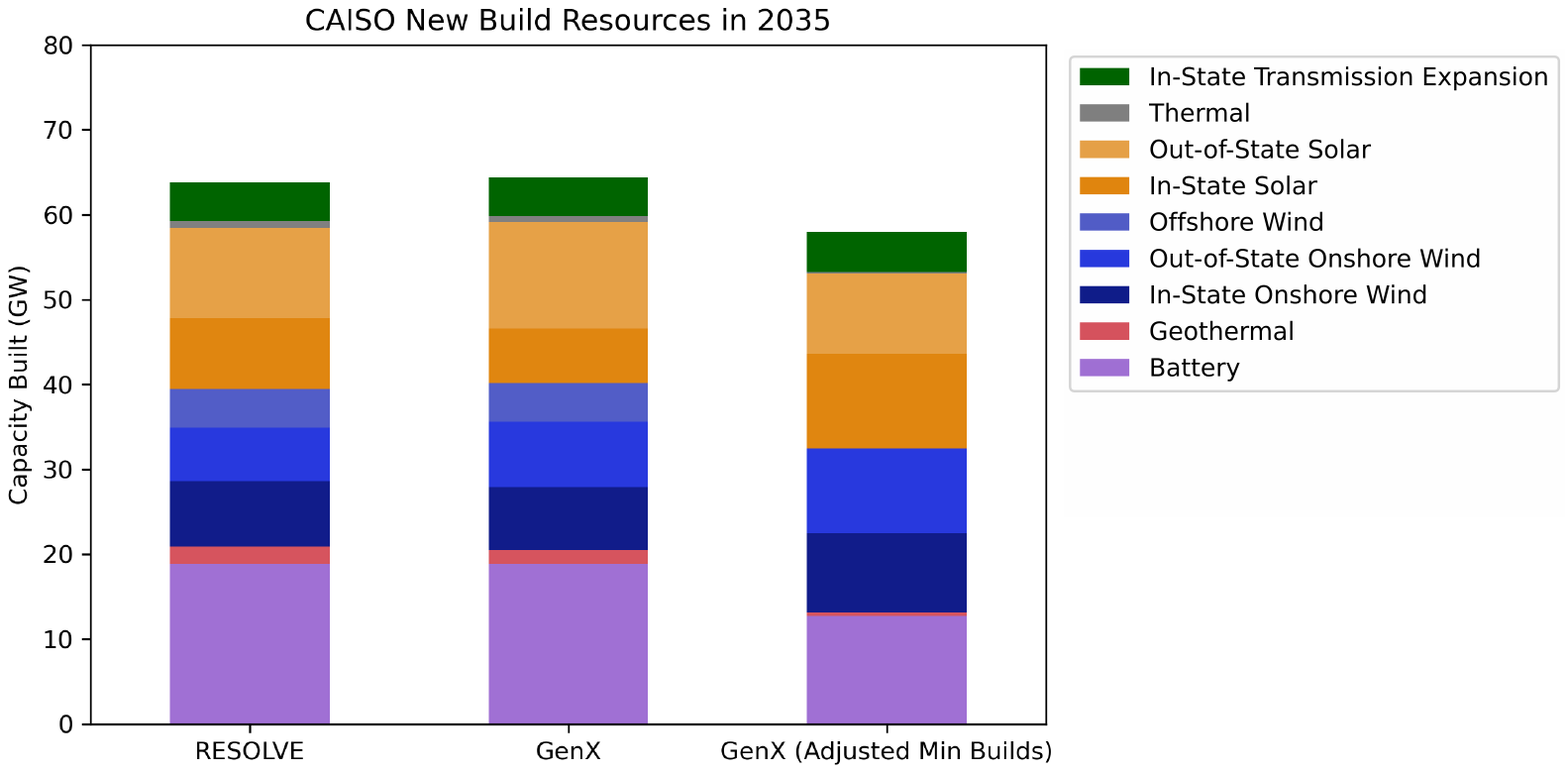}}
{Benchmarking Results: CAISO New Build Resources in 2035. \label{fig:benchmarking}}
{CAISO = California Independent System Operator.}
\end{figure}

A further adjustment that is made to our model is to reduce the number of planning periods from 12 to 2. We use planning periods of 2030, which represents years 2026-2030, and 2045, which represents years 2031-2045. This change is made for two reasons: a) to increase tractability, allowing for a scenario-based robust model that we use to compare to our formulation, the former of which would not be tractable for the full set of planning periods, and b) to cleanly separate build decisions into two stages: the ``here-and-now'' decisions that represent decisions being made in the near term by the resource planning process that the model is being used for, and the ``wait-and-see'' decisions that can be adapted later in response to new information and the resolution of key uncertainties. This two-stage framing is both necessary for two-stage adaptive robust optimization, and more representative of the decision context that the model is used for. In addition to reducing the number of planning periods, we also make the change of forcing all transmission expansion decisions to be made in the first stage, which is reflective of the fact that transmission lines in California can take over 10 years to build. Therefore, in our model, the first-stage decisions consist of 2026-2030 generation and storage capacity decisions and operations, as well as 2026-2045 transmission expansion decisions, while the second-stage decisions consist of 2031-2045 generation and storage capacity decisions and operations. The end result is a model with about 6 million variables and 3 million constraints, solved with the Gurobi commercial solver on a high performance computing cluster.

\subsection{Experimental design}\label{subsec:ExpDesign}

In our case study, we start from our benchmarked, two-stage deterministic GenX model, then explore the impact of adding robustness to the model. Our case study has two goals: 1) to examine the impact of applying a standard adaptive robust optimization framework to the deterministic model, which provides a baseline for comparison to our split-budget formulation, and 2) to examine the performance of our split-budget formulation in comparison to the standard formulation. 

\subsubsection{Standard adaptive robust optimization model}

To examine the impact of adding robustness to the deterministic model, we use a scenario-based two-stage adaptive robust optimization formulation as described above in Lemma \ref{thm:LPScenarios}. (This scenario-based approach in and of itself is not novel and is similar to the approach taken by \cite{moreira_climateaware_2021}.) We model seven key uncertainties, which we developed in coordination with stakeholders to reflect the most important uncertainties impacting the California PUC's resource planning process, which are enumerated in Table \ref{tab:uncertain-parameters}. The uncertainty ranges were chosen to represent a range of future trajectories that is plausible for each parameter.

\begin{table}[h!]
\centering
\begin{tabular}{p{0.55\textwidth} p{0.35\textwidth}}
\hline
\textbf{Uncertain Parameter} & \textbf{Uncertainty Range} \\
\hline
Solar, wind, and storage costs & +0 to +50\% \\
Natural gas prices &  +0 to +100\% \\
Fixed cost of maintaining thermal fleet &  +0 to +100\% \\
Load increases due to data centers and electrification &  +0 to +10\% above baseline projections\\
Out-of-state renewables availability & Limited availability: 2 GW wind, 5 GW solar \\
Offshore wind availability & Limited availability: no offshore wind available \\
Availability of imports during peak hours & -0 to -50\% \\
\hline
\end{tabular}
\caption{Key uncertain parameters and assumed ranges}
\label{tab:uncertain-parameters}
\end{table}

In the adaptive robust formulation, worst-case scenarios consist of combinations of these uncertain parameters taking their worst-case values. We use the parameter $\Gamma$, the uncertainty ``budget,'' to control the number of uncertain parameters taking their worst case values in each scenario. Since this scenario-based formulation enumerates all possible scenarios within the uncertainty set, this means that the model size scales combinatorially with the number of scenarios, making it intractable for large numbers of uncertainties. However as we demonstrate in our case study, the tractability is greatly increased when our split-budget formulation is used.

The first step in our case study is to examine the change to first-stage resource builds when this standard robust formulation is used, for a range of values of $\Gamma$ between 1 (resulting in 7 second-stage scenarios, each reflecting one parameter taking on a worst-case outcome) and 3 (resulting in 35 second-stage scenarios, reflecting all combinations of 3 possible worst-case outcomes). To determine the benefit of this added robustness, we also perform a second, stress-testing step, in which portfolios are stress-tested against a range of possible futures. (This stress-testing step is a crucial part of a strong uncertainty analysis framework regardless of whether advanced robust or stochastic formulations are used, as discussed in \cite{mantegna_integrated_2026}.) For the stress-testing step, we use the same uncertainties as previously discussed, as well as four additional uncertainties that primarily represent ``upsides'', meaning system costs could turn out better than expected if they deviate from their nominal values, described in Table \ref{tab:upside-parameters}.

\begin{table}[h!]
\centering
\begin{tabular}{p{0.55\textwidth} p{0.35\textwidth}}
\hline
\textbf{Uncertain Parameter} & \textbf{Uncertainty Range} \\
\hline
Electric vehicle load flexibility & 0 GW (baseline) to 10 GW available \\
Enhanced geothermal (EGS) availability & Baseline geothermal availability, to double baseline geothermal availability with 50\% lower cost \\
Diablo Canyon nuclear power plant extension &  Retire after 2030 (baseline) / Extended to 2045 \\
Import and export availability &  +0 to +50\% above baseline projections\\
\hline
\end{tabular}
\caption{Key ``upside'' uncertain parameters and assumed ranges}
\label{tab:upside-parameters}
\end{table}

These upside scenario tests are meant to gauge whether a robust strategy may lose out on opportunities for cost savings should favorable outcomes occur, or whether the strategy remains flexible enough to take advantage of these occurrences.

In our stress testing, we stress test first-stage portfolios across a range of realizations for these 11 total uncertain parameters. In the stress testing results, we show the results for scenarios where two uncertain parameters at a time deviate from their nominal values (to take their worst-case value in the case of the downside uncertainties, or their best-case value in the case of the upside uncertainties), which is a somewhat arbitrary choice, but one that also highlights the advantages of a ``middle of the road'' level of robustness where $\Gamma=2$.

\subsubsection{Split-budget formulation}

Following the examination of the impact of the standard robust formulation on both first-stage capacity builds and second-stage performance, we examine the impact of our split-budget formulation. While the split-budget formulation could handle a more granular set of cost uncertainties than the three listed above, we choose to keep the uncertainty set for costs the exact same as in the standard formulation so as to achieve an apples-to-apples comparison of the performance of our formulation. Therefore when we model the split uncertainty budget, our cost uncertainty set $U_c$ includes three cost parameters that are allowed to vary from their nominal values (items 1-3 in the 7-item list above), and our scenario-based uncertainty set $U_{RHS}$ includes four right-hand-side parameters that are allowed to vary from their nominal values (items 4-7 in the 7-item list above). In this portion of the case study, we compare the split-budget formulation to the standard formulation on the basis of first-stage builds, second-stage performance under each 'stress test' scenario, and solver times.

\subsection{Standard robust model results}\label{subsec:MainResults}

Figure \ref{fig:results_main} shows the results of comparing first-stage capacity builds in the deterministic model to those in the standard two-stage robust model, for varying levels of robustness from $\Gamma=1$ to $\Gamma=3$. The results indicate that the robust model chooses to invest significantly more in in-state transmission as compared to the deterministic model, with substantially greater in-state transmission expansion when $\Gamma\ge2$. The results for $\Gamma=3$ show slightly smaller transmission capacity expansion in MW compared to $\Gamma=2$, but in reality, transmission investments are shifted to more expensive projects that offer a wider range of benefits but are more expensive per megawatt of transfer capacity, meaning total transmission investment in dollars is greatest when $\Gamma=3$. This dynamic is illustrated in Figure \ref{fig:tx_build_maps_panel}, which shows the centroids of areas that are affected by transmission upgrades selected by the model, along with the size of the upgrades (in the RESOLVE model as in our benchmarked GenX model, transmission upgrades are represented via generic upgrade costs to increase deliverability for certain areas, with costing provided by California ISO studies, rather than specific candidate transmission paths that can be mapped as linear infrastructure). The robust model shows increased investment in transmission lines that provide adaptivity to ensure deliverability of resources that might be needed under possible futures, but would not be optimal in a deterministic case in which the future is assumed to be known with perfect certainty. Specifically, for $\Gamma=2$ and $\Gamma=3$, the model chooses to invest in transmission upgrades to deliver Southern Central Valley resources, and in the case of $\Gamma=3$, the model invests in a small amount of transmission to deliver offshore wind on the Humboldt Coast in the northern part of the State. The value of these lines is illustrated in the stress test cost results below (see Figure \ref{fig:second_stage_cost_distribution}).

\begin{figure}
     \FIGURE
    {\includegraphics[width=\textwidth]{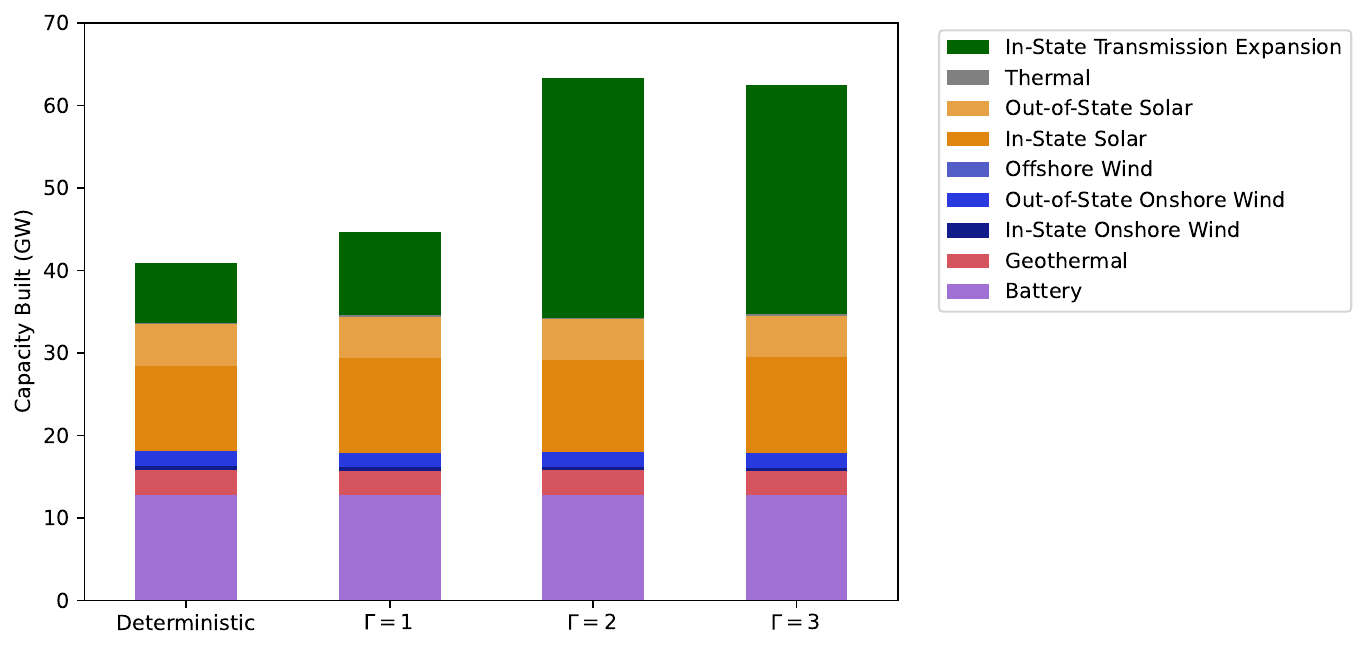}} 
{First-stage resource builds in deterministic vs robust models, with varying levels of robustness. \label{fig:results_main}}
{$\Gamma$ refers to the number of uncertain parameters that are allowed to deviate to their worst-case values.}
\end{figure}

\begin{figure}
     \FIGURE
    {\includegraphics[width=\textwidth]{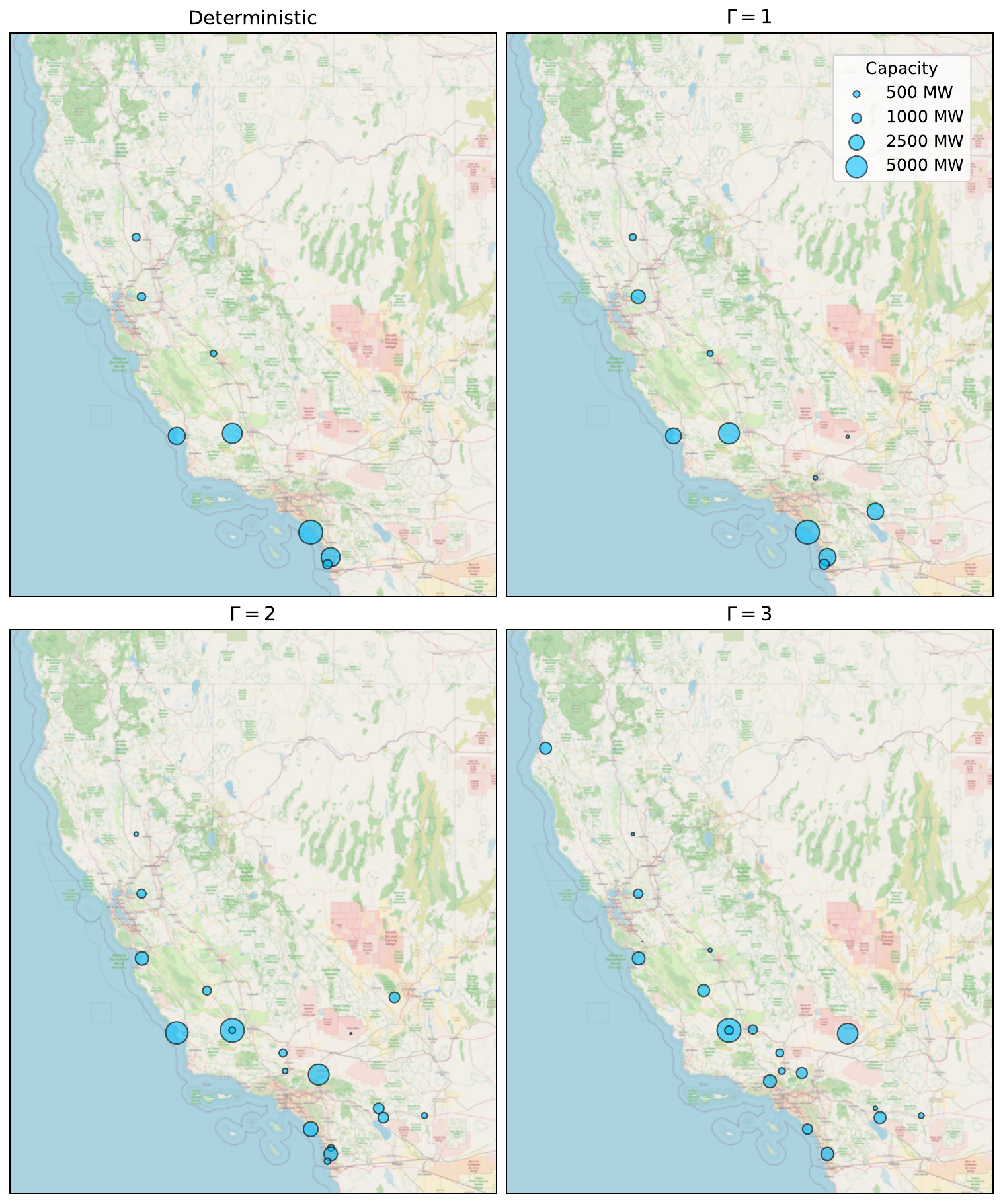}} 
{Spatial distribution of transmission builds in the four portfolios. \label{fig:tx_build_maps_panel}}
{Bubbles show the location of the centroids of the geographical areas where deliverability improvements (from the designated regions to load regions) are selected by the model. These decisions correspond to generic costs of deliverability improvements for specific regions and not to specific transmission paths.}
\end{figure}

Figure \ref{fig:second_stage_cost_distribution} shows the impact of added robustness on both near-term, first stage costs (left panel), and the range of long-term cost outcomes under each second-stage stress test scenario (right panel). The results demonstrate that adding robustness increases near term costs slightly, between \$0-1B/yr (0-0.4 cents/kWh) depending on the value of $\Gamma$, while potentially saving more than \$20B/yr (8 cents/kWh) in long-term ratepayer costs in certain worst-case scenarios as shown by the tail end of the box-and-whisker plots. Notably, for $\Gamma=1$, near-term cost increases are negligible, while the small amount of added transmission built in this case saves over \$10B/yr (4 cents/kWh) in the worst-case second stage outcome. (These figures include non-served energy costs, which are not costs that directly translate to rates; however, if non-served energy costs are not included as shown in Figure \ref{fig:second_stage_cost_distribution}, potential second-stage cost savings still are about \$5B/yr or 2 cents/kWh.) This demonstrates the ability for the robust model, which endogenizes uncertainty, to identify portfolios that offer significant protection against downside risks, with negligible upfront cost increases. On the other hand, the deterministic model, which assumes the future is known with perfect certainty, does not take advantage of this ability to significantly increase robustness with only very small changes. This highlights an important dynamic commonly found in uncertainty-aware modeling: often there is a significant ``near-optimal'' space of decisions that are near the deterministic optimum in terms of cost and that offer significant protection against downside risks, but that would never be selected by a deterministic least-cost optimization model. This is especially likely for models with a very large decision space such as the one used in this study; for decision environments in which there is a narrow range of possible decisions, uncertainty-aware modeling is less useful and a simple scenario analysis framework is likely to suffice \citep{mantegna_integrated_2026}.

\begin{figure}
     \FIGURE
    {\includegraphics[width=\textwidth]{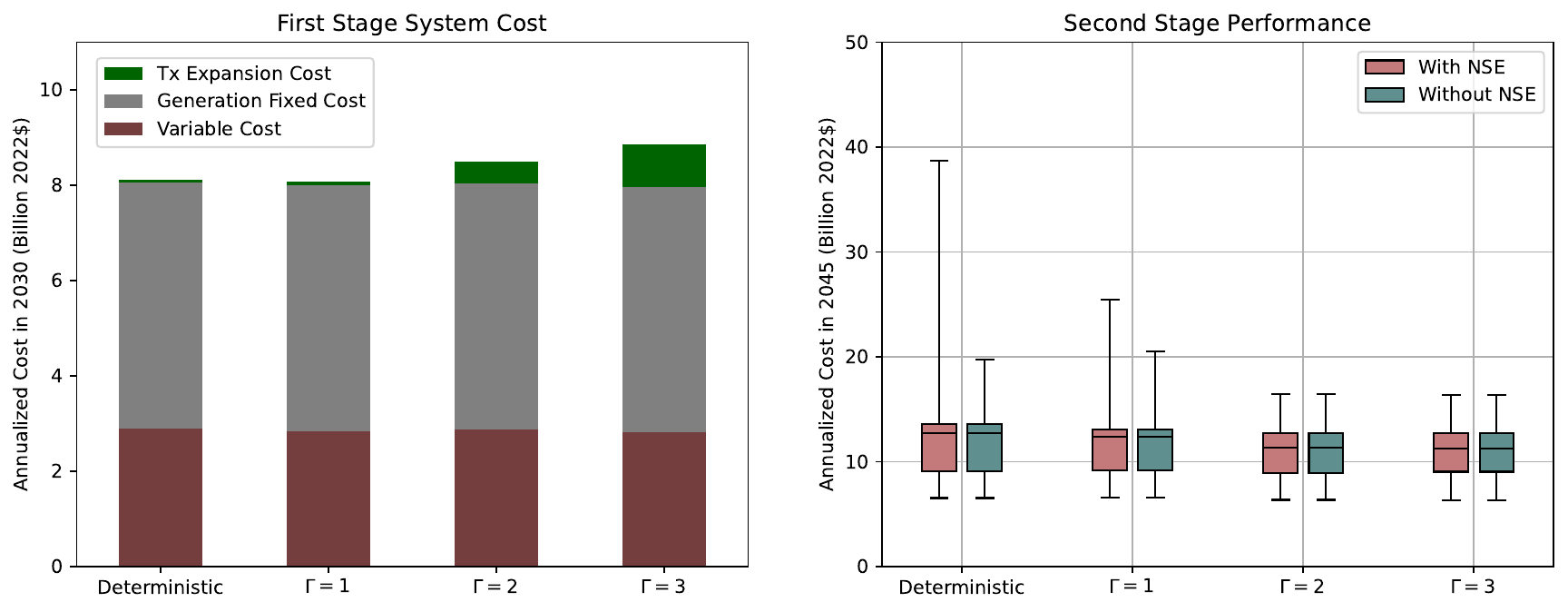}} 
{First stage cost increase by scenario, and second stage performance, from stress-testing. \label{fig:second_stage_cost_distribution}}
{Whiskers show absolute minimum and maximum scenarios.}
\end{figure}

It is also possible to identify the ``worst case'' that drives particular robust solutions, in order to get some insight into the results, particularly in the scenario-based framework used here. The worst-case modeled in the $\Gamma=2$ case, as an illustrative example, is the second stage scenario in which load is higher than expected due to a combination of electrification and data centers and out-of-state renewables availability is lower than expected; the combination of these factors means that more in-state renewables are needed in the second stage, for which there must be transmission available if the resources are to be deliverable to load. The risk-aware robust model for this $\Gamma=2$ case decides in stage one to build more in-state transmission to hedge against this possibility, as evidenced in Figures \ref{fig:results_main} and \ref{fig:tx_build_maps_panel}, and as a result, first stage costs increase by about \$0.5B/yr. However, this same worst-case is also what drives the high end of the second stage costs in the right panel of Figure \ref{fig:second_stage_cost_distribution}. Thus we see that this small amount of increased first-stage transmission investment leads to over \$20B/yr of savings in the second stage compared to the deterministic case, mainly due to high levels of non-served energy since the model cannot build enough deliverable resources to serve the unexpectedly high load, or about \$5B/yr of savings if non-served energy costs are not included.

Figure \ref{fig:second_stage_cost_upside} shows the performance of the deterministic portfolios compared to the robust portfolios in both the ``deterministic'' second-stage scenario where all uncertain parameters take their nominal values, and the four ``upside'' scenarios in which the 4 upside parameters respectively take their best-case values. The results indicate that the robust portfolios do not lead to lower performance in these deterministic and upside scenarios; rather, the robust portfolios actually sometimes offer slightly better performance in these scenarios, although the difference is minimal. (One may ask why some robust portfolios do better in the deterministic case; the reason is that the deterministic model is optimizing for both first- and second-stage costs, rather than only second-stage costs which are shown here.) One notable difference, although similarly minor, is that while the deterministic case does not show any decreased second-stage costs in the ``EGS'' scenario in which EGS becomes available at low cost, the robust portfolios do show a cost savings in this case, due to the increased investment in transmission which allows for this resource to be deliverable if and when it becomes available. Thus the results indicate that in this case study, transmission investments are highly adaptive, becoming useful both in downside cases as well as upside cases.

\begin{figure}
     \FIGURE
    {\includegraphics[width=\textwidth]{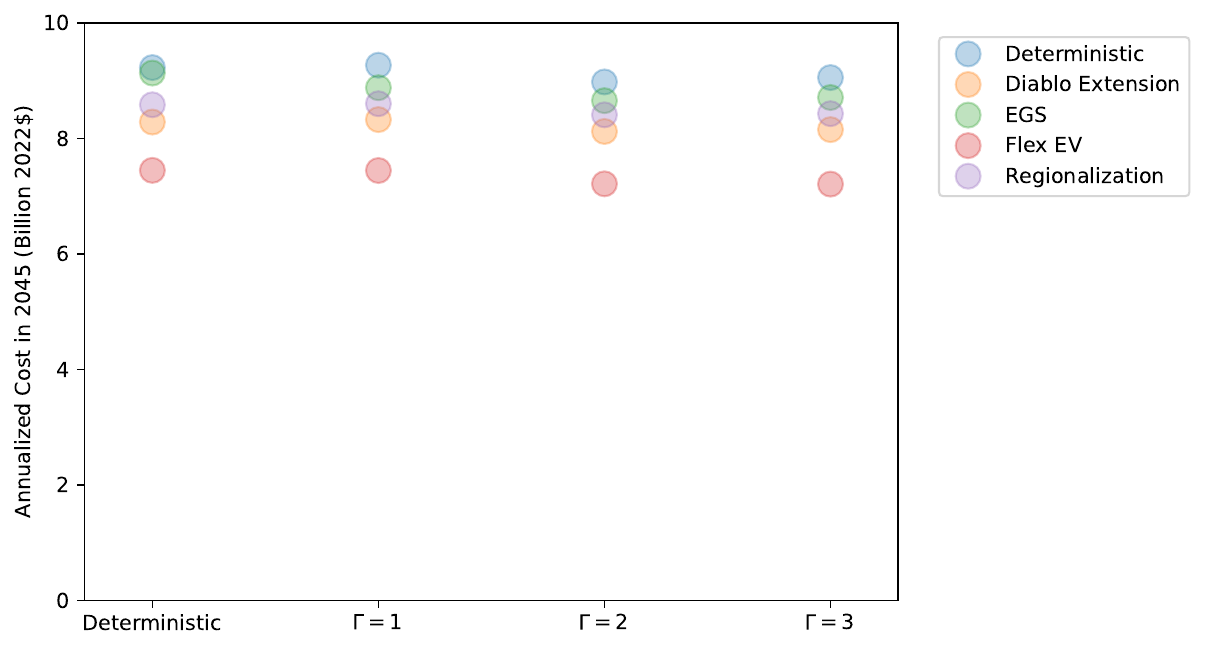}} 
{Performance of portfolios in deterministic and upside scenarios. \label{fig:second_stage_cost_upside}}
{}
\end{figure}

\subsection{Split uncertainty budget results}\label{subsec:ComparisonResults}

Figure \ref{fig:formulation_comparison_main} shows a comparison between capacity builds for the standard robust portfolios and the portfolios produced with the split budget formulation described in Section \ref{sec:SplitBudgetFormulation}. The split budget formulations are shown next to their standard combined-budget counterparts: for $\Gamma=2$ there is only one corresponding split-budget portfolio, $\Gamma_{RHS}=1, \Gamma_{cost}=1$, while for $\Gamma=3$ there are two such combinations. The results show that the split-budget portfolios are similar to their standard counterparts, although generally have a lower level of transmission build and therefore robustness. This is because, as discussed above, the split budget models a subset of the full, combined uncertainty set. The results also highlight that different allocations of uncertainty budget to cost vs right-hand-side, for the same total uncertainty budget, can impact model outcomes.

\begin{figure}
     \FIGURE
    {\includegraphics[width=\textwidth]{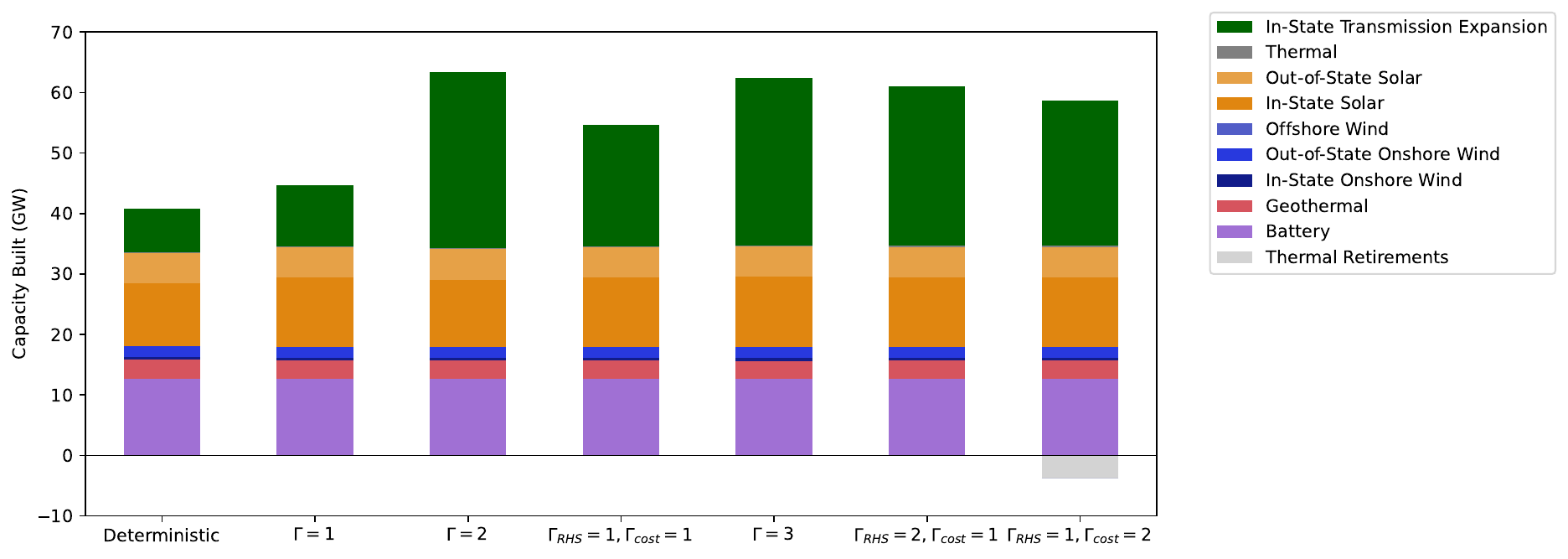}} 
{Comparison of first-stage builds between standard and split-budget robust formulations. \label{fig:formulation_comparison_main}}
{}
\end{figure}

Figure \ref{fig:formulation_comparison_costs} shows the impact on first-stage cost and second-stage robustness from using the split budget formulation. The results show that the changes to both first-stage cost and second-stage performance are largely proportional to the change in transmission builds: for example the $\Gamma_{RHS}=1, \Gamma_{cost}=1$ scenario has a lower first-stage transmission build compared to its $\Gamma=2$ counterpart, accompanied by a proportionally lower first-stage cost, as well as slightly worse second-stage performance. The other split-budget portfolios show a similar dynamic when compared to their combined-budget counterparts. In short, the split-budget formulations offer a decreased level of robustness compared to the combined budget formulation due to modeling only a subset of the full uncertainty set, with slightly decreased transmission builds along with predictable changes to both first-stage cost and second-stage performance. That said, all split-budget models result in substantially improved robustness relative to a deterministic case, offering substantial 'downside protection' against worst-case outcomes and a similar range of potential costs under stress test scenarios as the standard robust formulation.

\begin{figure}
     \FIGURE
    {\includegraphics[width=\textwidth]{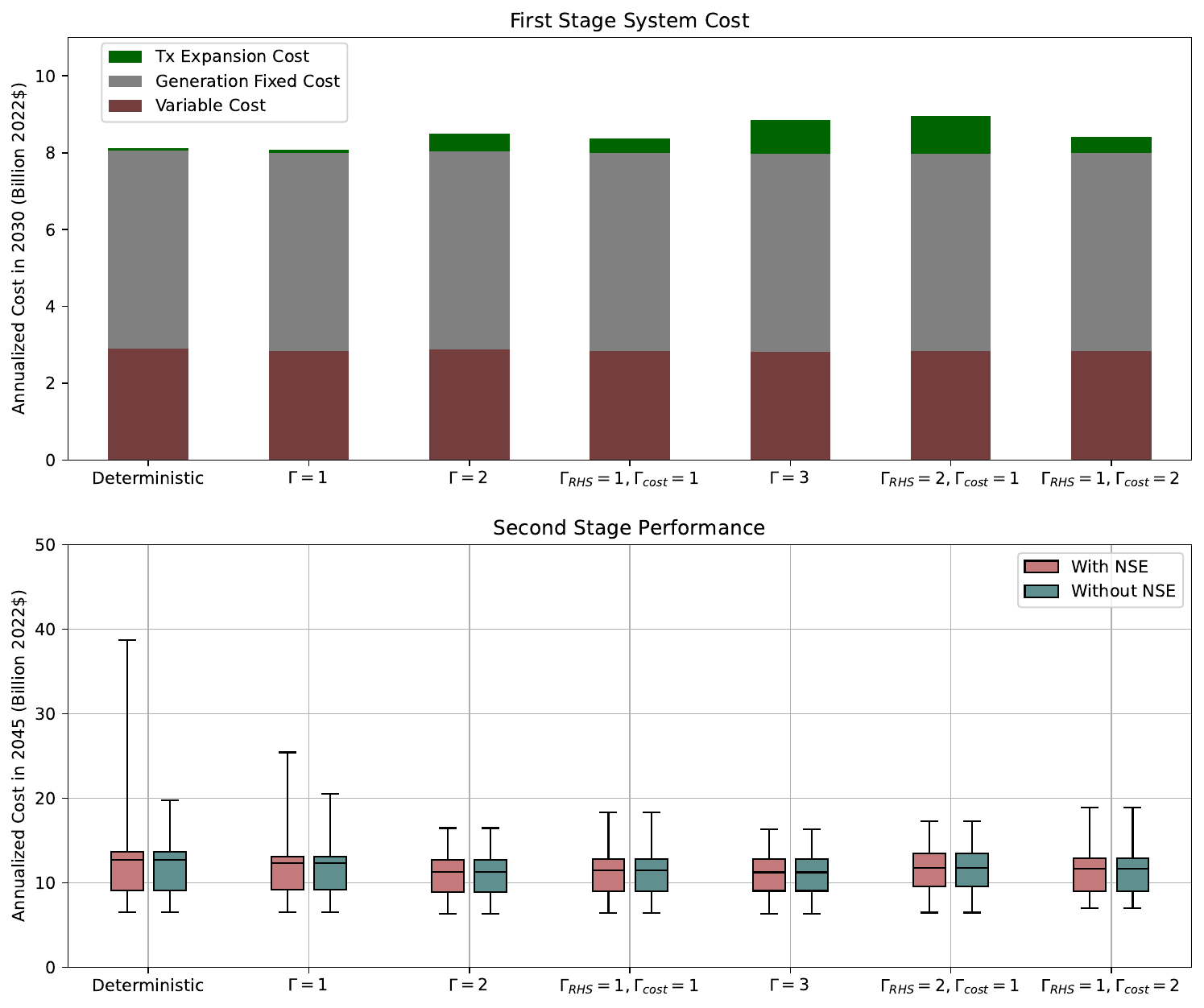}} 
{Cost results comparison between standard and split-budget robust formulations. \label{fig:formulation_comparison_costs}}
{}
\end{figure}

Figure \ref{fig:solver_time} shows that although the split-budget portfolios offer slightly less robustness compared to their full-budget counterparts, they offer significant advantages in terms of solution times. This difference is especially significant for the $\Gamma=3$ case, in which the corresponding split-budget cases have solution times that are about an order of magnitude lower. Due to the combinatorial scaling of the scenario-based robust model, this difference would be even more pronounced if more sources of uncertainty were modeled. This computational advantage also means that the split-budget formulation could be used to model cases that would be intractable using the standard formulation; for example cases in which more planning periods were included. Notably, all three split-budget cases show lower solution times than even the $\Gamma=1$ case, since the split-budget cases only model scenario-based uncertainties for 4 out of 7 of the total uncertainties.

\begin{figure}
     \FIGURE
    {\includegraphics[width=\textwidth]{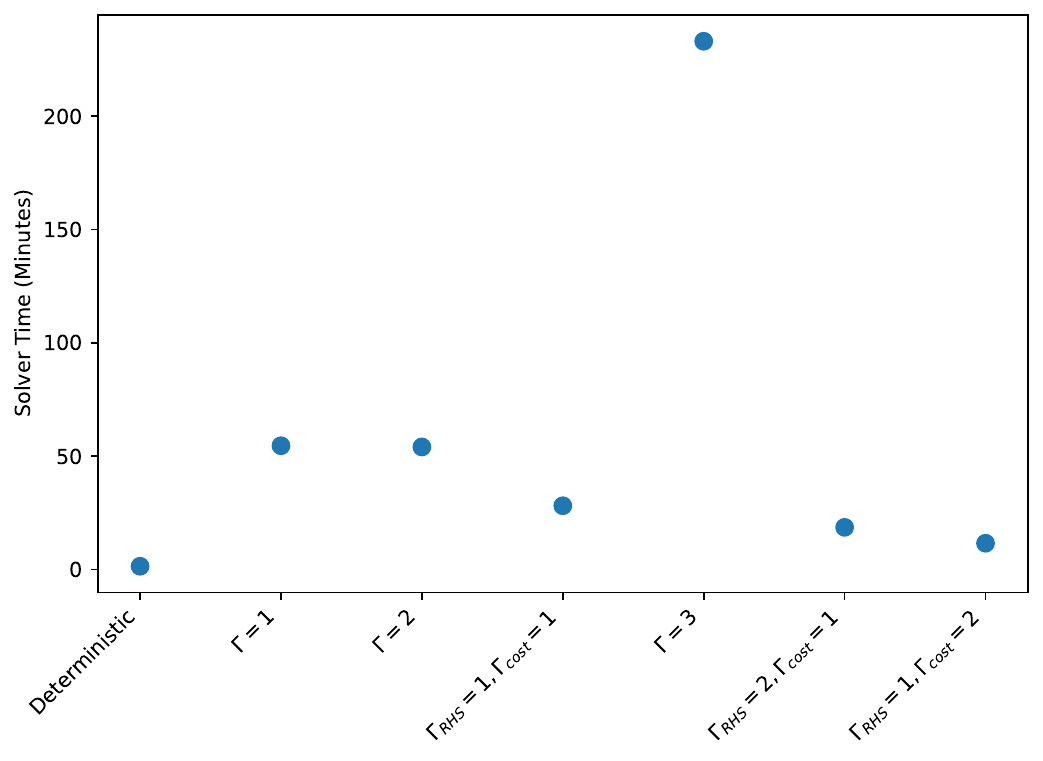}} 
{Solution time by scenario. \label{fig:solver_time}}
{}
\end{figure}

Figures \ref{fig:formulation_comparison_maps} and \ref{fig:formulation_comparison_upsides} in the Appendix show a comparison of the split- and combined-budget formulations in terms of geospatial transmission build, and upside performance. Additionally, Figures \ref{fig:sanity_main}, \ref{fig:sanity_maps}, \ref{fig:sanity_second_stage}, and \ref{fig:sanity_upside} in the Appendix compare the split-budget results to cases where the split budget is reproduced with scenarios alone. These cases demonstrate that the split-budget results are a result of the uncertainty budget being split and not a result of an unintended side effect of the dualized split-budget formulation.

\section{Concluding remarks}\label{sec:Conclusion}

In this section we highlight several key findings of our analysis:

\begin{enumerate}
	\item In the context of the capacity expansion problem for the State of California, using a robust formulation as opposed to the standard deterministic formulation leads to the model selecting significantly more in-state transmission, highlighting transmission as a key lever for increasing robustness and adaptability to uncertainty over the long-term planning horizon. We expect that other jurisdictions may similarly find markedly different portfolios as a result of using robust formulations for planning; in general, the value of robust planning is highest when the decision space is large and when there are certain investments available that can help to hedge against uncertainty.
	\item This increased investment in transmission increases near-term costs slightly, but offers significant long-term benefits in the form of potential avoided ``downside'' scenarios where key uncertainties turn out much worse than expected. In this way, the added robustness can be seen as an ``insurance policy'' that helps avoid severe downside risks. Notably, our analysis finds that a significant amount of robustness can be added for a negligible increase in near-term costs.
	\item The portfolios selected by the robust model, while helping to avoid major downside cases, do not sacrifice on ``upside'' performance in cases where uncertain parameters turn out better than expected. While we cannot ensure this finding generalizes to all circumstances, it does show that avoiding worst-case outcomes through robust optimization does not necessarily come at the cost of missing out on potential cost savings under more positive outcomes.
    \item The current deterministic planning processes used for generation and transmission planning are likely to be exposing the system and ratepayers to significant outage and affordability risks, particularly in California as demonstrated in this paper, which could be mitigated via the use of robust and/or stochastic formulations like the one discussed in this paper.
	\item The split-budget formulation introduced in this paper has the impact of modeling a subset of the uncertainty set that is modeled under the standard ARO formulation. Specifically, the formulation models scenarios at optimality where both one or more cost parameters and one or more right-hand side (RHS) constraint parameters take on their worst-case outcomes. This reduces robustness slightly, compared to robust optimization considering the full uncertainty set covering all possible combinations of cost and RHS parameter realizations. The split-budget formulation has significant advantages in terms of computational tractability, allowing the endogenization of uncertainty into large-scale, real-world planning problems, for which traditional ARO or stochastic optimization would be unlikely to be tractable.
\end{enumerate}

The split-budget method introduced in this paper could be introduced into real-world planning processes without modification. Such a model formulation does not, however, have to replace existing deterministic models completely. Rather, in the near term we recommend a parallel approach, in which robust models could be used alongside their deterministic counterparts, highlighting how the portfolios they produce are different. Ideally, both types of models would be compared within a strong scenario analysis framework (as described in \cite{mantegna_integrated_2026}), in which a range of portfolios is developed, and then stress-tested over a range of key uncertain outcomes, similar to the stress-testing section of this paper. This would highlight the potential future benefits of the more robust portfolios, and allow decision-makers to make an informed decision regarding how much ``robustness'' is appropriate to add. In this manner, decision-makers can gain confidence with robust formulations, and experience their benefits, without abandoning existing models and processes completely.


%

\clearpage

\begin{APPENDIX}{Additional Results}
\FloatBarrier

\begin{figure}
     \FIGURE
    {\includegraphics[width=\textwidth]{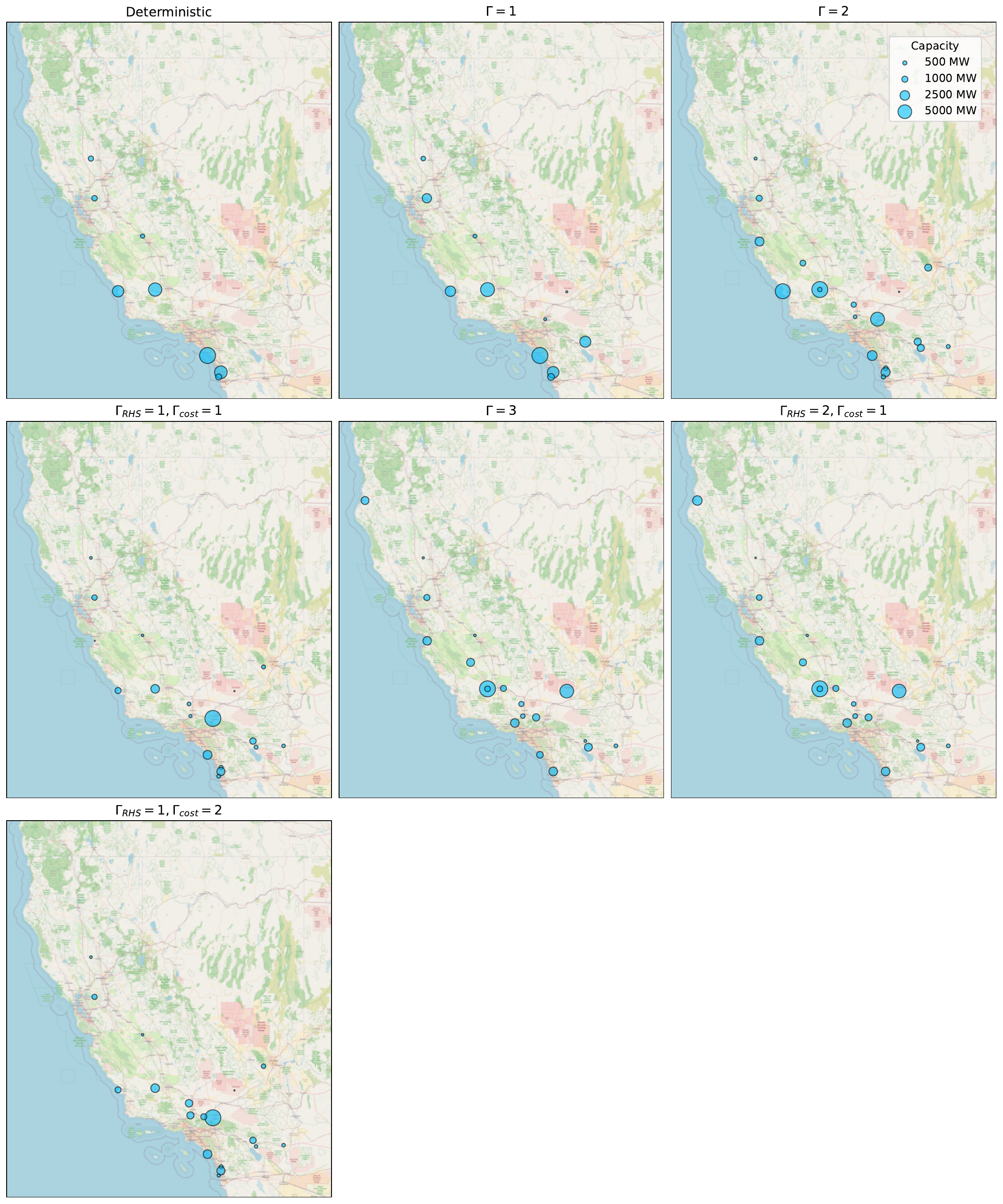}} 
{Comparison of geographical transmission upgrades selected between standard and split-budget robust formulations. \label{fig:formulation_comparison_maps}}
{}
\end{figure}

\begin{figure}
     \FIGURE
    {\includegraphics[width=\textwidth]{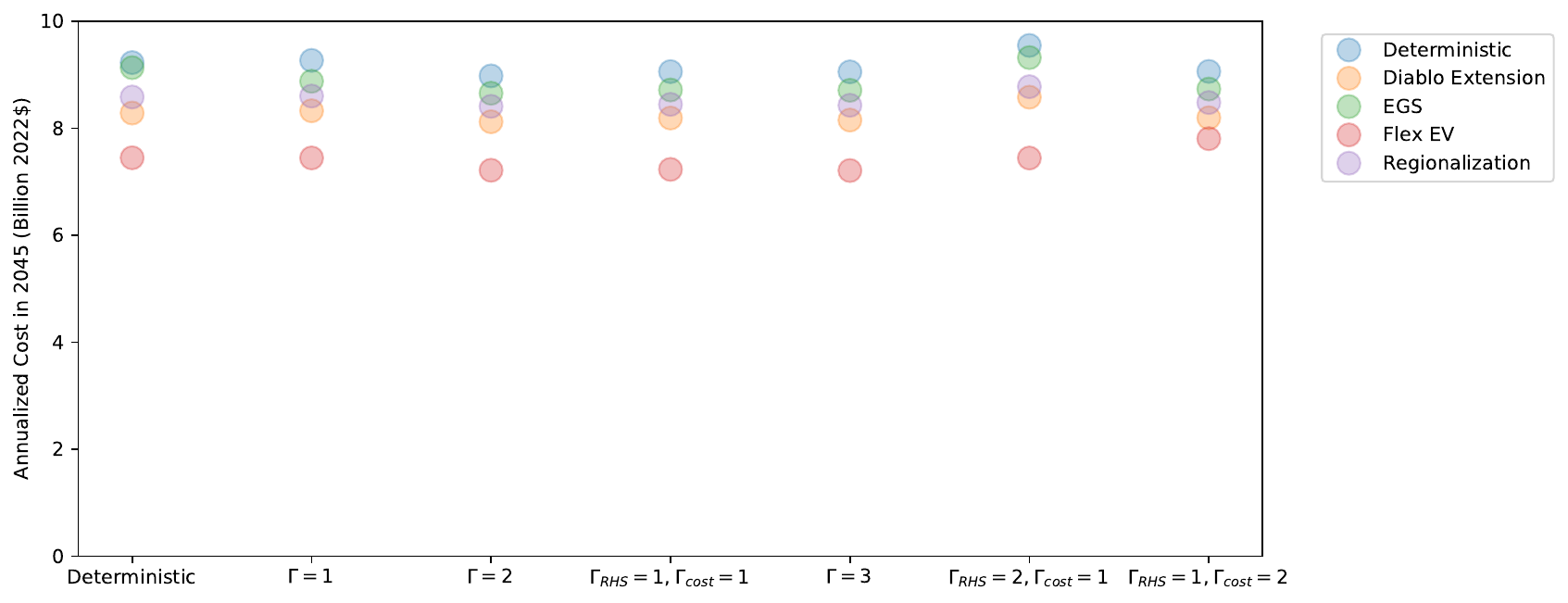}} 
{Comparison of upside performance between standard and split-budget robust formulations. \label{fig:formulation_comparison_upsides}}
{}
\end{figure}

\begin{figure}
     \FIGURE
    {\includegraphics[width=\textwidth]{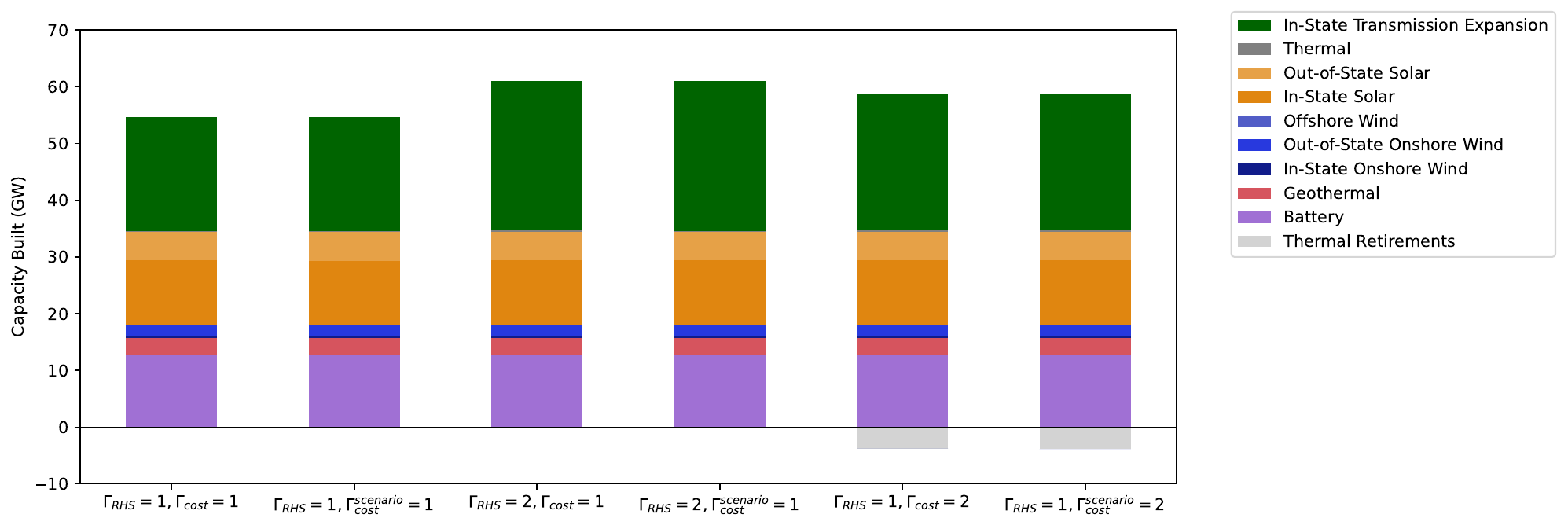}} 
{Comparison of first-stage resource builds between split-budget formulation and its corresponding scenario-based counterpart. \label{fig:sanity_main}}
{}
\end{figure}

\begin{figure}
     \FIGURE
    {\includegraphics[width=0.75\textwidth]{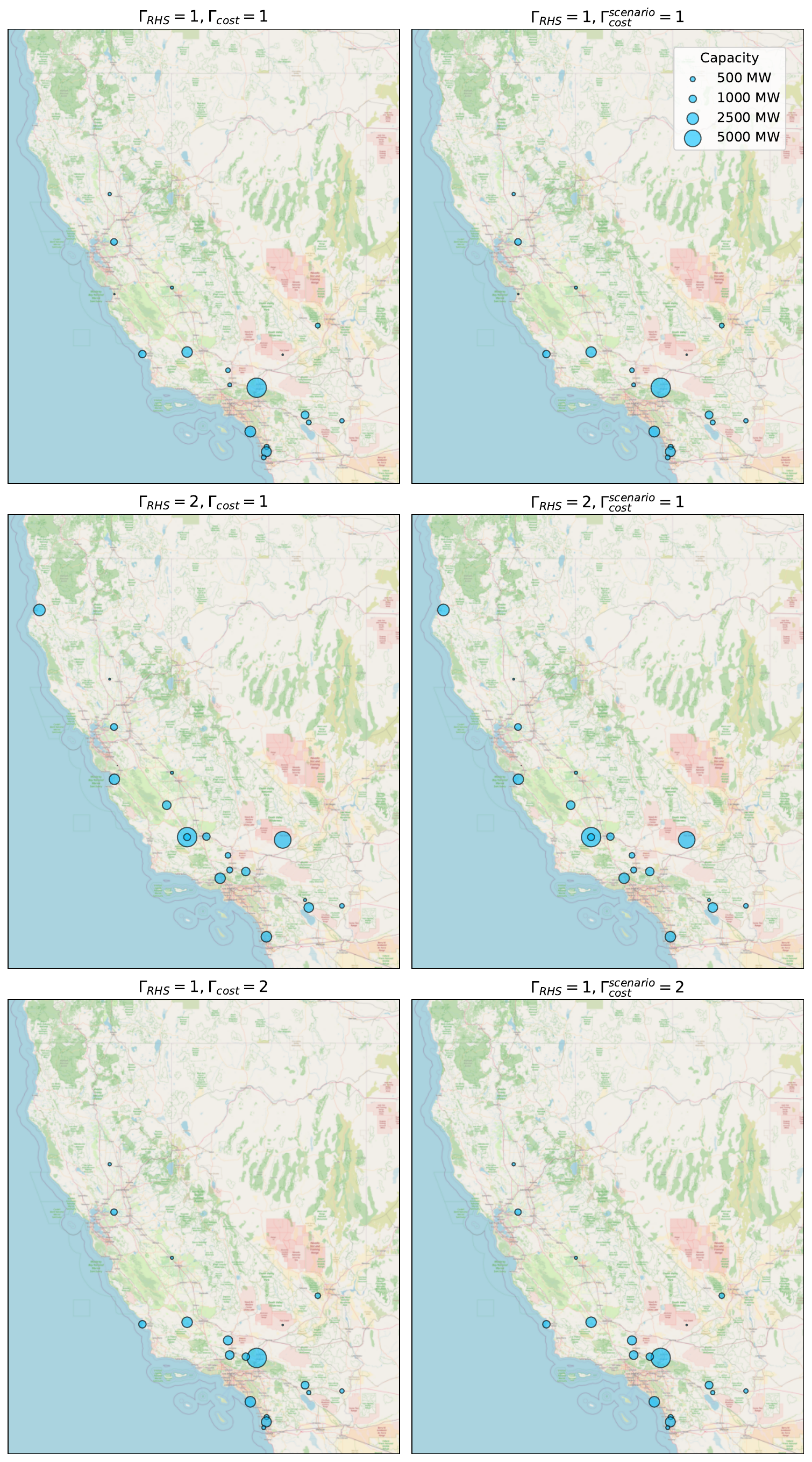}} 
{Comparison of first-stage transmission decisions between split-budget formulation and its corresponding scenario-based counterpart. \label{fig:sanity_maps}}
{}
\end{figure}

\begin{figure}
     \FIGURE
    {\includegraphics[width=\textwidth]{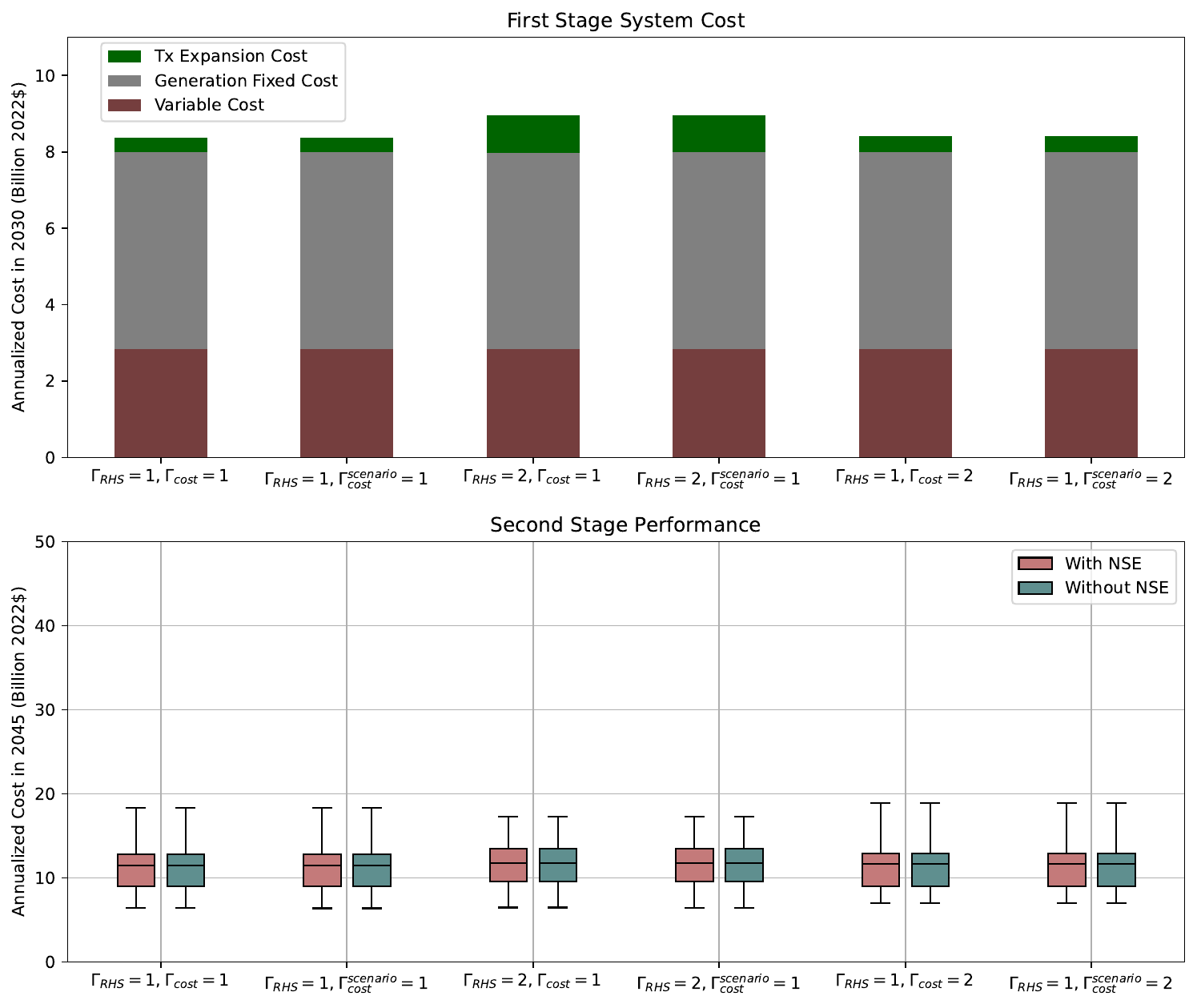}} 
{Comparison of portfolio costs between split-budget formulation and its corresponding scenario-based counterpart. \label{fig:sanity_second_stage}}
{}
\end{figure}

\begin{figure}
     \FIGURE
    {\includegraphics[width=\textwidth]{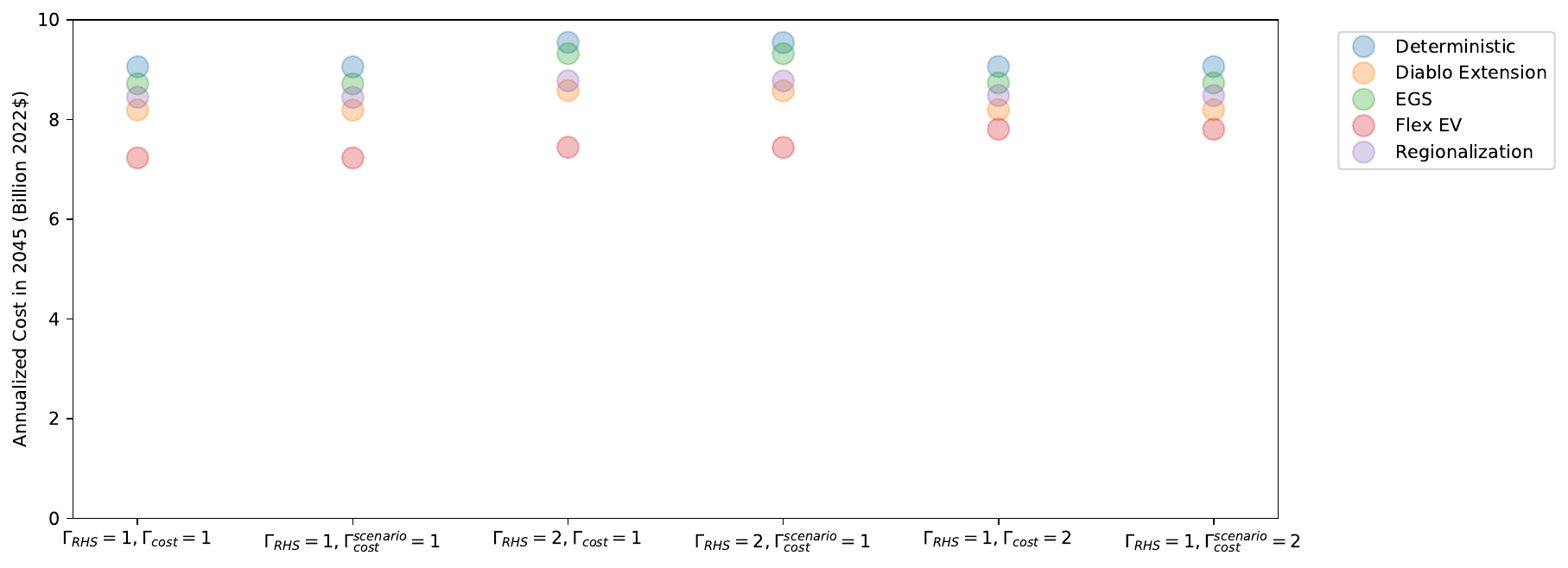}} 
{Comparison of upside performance between split-budget formulation and its corresponding scenario-based counterpart. \label{fig:sanity_upside}}
{}
\end{figure}

\end{APPENDIX}
%
%


\clearpage

\ACKNOWLEDGMENT{We would like to express our sincere gratitude to Ryan Tracey, Amit Ranjan, Jim Himelic, Jaxon Stuhr, Sara Maatta, and Spandan Gandhi for their support in conceptualizing and developing this research project. We would also like to thank Prof. Ben Hobbs for his foundational work on planning under uncertainty that this work builds on, Prof. Warren Powell for his work on sequential decision making under uncertainty that helped the authors formulate the research question, and Prof. Dimitris Bertsimas for his work on adaptive robust optimization that this work directly builds on.}


\bibliographystyle{informs2014} 
\bibliography{references.bib} 





\end{document}